\documentclass[11pt,american]{article}
\usepackage[T1]{fontenc}
\usepackage[utf8]{inputenc}
\usepackage[a4paper]{geometry}
\geometry{verbose,tmargin=4cm,bmargin=3cm,lmargin=2.5cm,rmargin=2.5cm,headheight=0.8cm,headsep=1cm,footskip=0.5cm}
\usepackage{url}
\usepackage{amsmath}
\usepackage{amsthm}
\usepackage{amssymb}
\usepackage{graphicx}

 \renewcommand{\vec}[1]{\boldsymbol{#1}}


\theoremstyle{definition}
\newtheorem{define}{Definiton}[section]
\newtheorem{theorem}[define]{Theorem}
\newtheorem{lemma}[define]{Lemma}

\theoremstyle{remark}


\newcommand{\eps}{\varepsilon}
\newcommand{\divergence}{\mathop{\mathrm{div}}\nolimits}
\newcommand{\grad}{\mathop{\mathrm{grad}}\nolimits}

%
\usepackage{color}
\usepackage[normalem]{ulem}
\definecolor{DarkGreen}{rgb}{0,0.5,0.1} 

\newcommand\soutD{\bgroup\markoverwith
{\textcolor{DarkGreen}{\rule[.5ex]{2pt}{1pt}}}\ULon}
\newcommand{\Hm}[1]{\leavevmode{\marginpar{\tiny%
$\hbox to 0mm{\hspace*{-0.5mm}$\leftarrow$\hss}%
\vcenter{\vrule depth 0.1mm height 0.1mm width \the\marginparwidth}%
\hbox to
0mm{\hss$\rightarrow$\hspace*{-0.5mm}}$\\\relax\raggedright #1}}}

\begin{document}

\title{\textbf{Effective quantum Hamiltonian in thin domains with non-homogeneity}}
\author{Romana Kvasničková}
\date{\emph{Department of Mathematics, Faculty of Nuclear Sciences and Physical Engineering, Czech Technical University in Prague, Trojanova 13, 120 00 Prague, Czech Republic; kvasnrom@fjfi.cvut.cz}
\medskip \\
{\small March 16, 2022}}

\maketitle

\begin{abstract} 
\noindent
We consider the Laplacian with a non-homogeneous metric
in a tubular neighbourhood of a compact hypersurface
in the Euclidean space of arbitrary dimension,
subject to Neumann boundary conditions.
It is shown that, 
in the limit of the width of the neighbourhood shrinking to zero, 
the operator converges in a generalised
norm-resolvent sense to an effective Laplace--Beltrami-type operator 
on the hypersurface.
In this way, we generalise and give an insight into
the convergence of eigenvalues obtained by Yachimura~\cite{tosy}.
\vspace{0.5cm}

\noindent 
\textbf{Key words:} Laplacian, 
Neumann boundary conditions, non-homogeneous metric, norm-resolvent convergence,
effective model
\end{abstract}
%

\section{Introduction}
%

Let $\Sigma$ be a connected orientable compact 
$C^3$ hypersurface in $\mathbb{R}^{d}$, for $d\geq 2$, with the Riemannian metric $g$ induced by the embedding.
The orientation is fixed by the choice of 
a global unit normal vector field $\vec{n}$ on~$\Sigma$.
Since~$\Sigma$ is compact, the tubular neighbourhood  
\begin{equation}\label{domain}
    \Omega_{\varepsilon} :=
    \left\{ s+ \varepsilon t \vec{n} 
    \ \big| \ (s,t)\in \Sigma\times I\right\}
\end{equation}
with $I:=(-1, 1)$
is a $C^2$-smooth open set for all sufficiently small $\varepsilon>0$.  
This paper is concerned with spectral properties of 
the Laplace-type operator
\begin{equation}\label{operator}
  \tilde{H}_\eps  := -\divergence a\grad
  \qquad \mbox{in} \qquad 
  L^2(\Omega_{\varepsilon})
  ,
\end{equation}
subject to Neumann boundary conditions,
where $a:\Omega_{\varepsilon} \to \mathbb{R}$
is a uniformly positive and bounded function.  

The operator~$\tilde{H}_\eps$ can be interpreted as the quantum Hamiltonian
of a (quasi-)particle constrained to a non-homogeneous nanostructure.
It is well known that the effective quantum dynamics in the limit $\eps \to 0$
drastically depends on the choice of boundary conditions 
(see~\cite{K5,K10} for a comparison).
For the present case of Neumann boundary conditions,
the pioneering work of Schatzman \cite{Schatzman_1996}
in the case of the \emph{homogeneous} Laplacian (i.e.\ $a=1$)
proves that the effective dynamics is governed
by the Laplace--Beltrami operator 
$-\Delta_g = \divergence_g \, \grad_g$ on the hypersurface~$\Sigma$.
More specifically, denoting by 
$\{(\lambda_{\varepsilon})_{k}\}_{k=1}^\infty$
and $\{\lambda_{k}\}_{k=1}^\infty$
the set of eigenvalues of~$\tilde{H}_\eps$ and $-\Delta_g$,
respectively, arranged in an increasing order 
and repeated according to multiplicity,
one has the following convergence result.
\begin{theorem}[Schatzman~\cite{Schatzman_1996}]\label{Thm.Schatzman}
If $a(x) = 1$ for almost every $x \in \Omega_\eps$, 
then (for each $k \in \mathbb{N}$)
\begin{equation}\label{Schatzman}
  (\lambda_{\varepsilon})_{k} = \lambda_{k} + O(\eps)
  \qquad \mbox{as} \qquad
  \eps \to 0. 
\end{equation}
\end{theorem}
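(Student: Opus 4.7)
The plan is to transport the problem to the fixed cylinder $\Sigma \times I$ via the diffeomorphism $\phi_\eps(s,t) := s + \eps t\,\vec n(s)$, which is well defined for $\eps$ small. The pullback of the Euclidean metric splits as $G_\eps = g_\eps \oplus \eps^2\,dt^2$, where $g_\eps := g\bigl((I-\eps t W)\cdot,\,(I-\eps t W)\cdot\bigr)$ and $W$ is the Weingarten map of $\Sigma$; the associated volume form is $\eps\,J_\eps\,ds\,dt$ with $J_\eps := |\det(I-\eps t W)|$. After the unitary rescaling that removes the global factor $\eps$, the operator $\tilde H_\eps$ is unitarily equivalent to the operator $H_\eps$ on $L^2(\Sigma \times I,\,J_\eps\,ds\,dt)$ associated with the quadratic form
\begin{equation*}
 Q_\eps[\psi] \;=\; \int_{\Sigma \times I}\Bigl(\,g_\eps^{ij}\,\partial_i\psi\,\overline{\partial_j\psi} \;+\; \eps^{-2}\,|\partial_t\psi|^2\,\Bigr)\,J_\eps\,ds\,dt, \qquad \psi \in H^1(\Sigma \times I),
\end{equation*}
with no boundary conditions at $t=\pm 1$ (Neumann is the free choice).

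Two simple facts drive the analysis. First, since $\Sigma$ is compact and $C^3$, the Weingarten map $W$ is bounded, so $J_\eps = 1 + O(\eps)$ and $g_\eps = g + O(\eps)$ uniformly on $\Sigma \times I$; hence $Q_\eps$ is an $O(\eps)$ perturbation of the decoupled product form $\int_\Sigma\int_I\bigl(|\nabla_g u|_g^2 + \eps^{-2}|\partial_t u|^2\bigr)\,ds\,dt$. Second, every $\psi \in H^1(\Sigma \times I)$ splits uniquely as $\psi(s,t) = \psi_0(s) + \psi_\perp(s,t)$, with $\psi_0 := \tfrac12\int_{-1}^1 \psi\,dt$ and $\int_{-1}^1 \psi_\perp\,dt = 0$; the one-dimensional Poincaré inequality in $t$ yields $\|\psi_\perp\|_{L^2(\Sigma\times I)}^2 \le C\,\|\partial_t\psi_\perp\|_{L^2(\Sigma\times I)}^2$, so the $\eps^{-2}$ prefactor in $Q_\eps$ penalises any transverse dependence on the scale $\eps$.

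For the upper bound in min-max, I use the trial space $\mathrm{span}\{u_1,\dots,u_k\}$, where the $u_j$ are the first $k$ eigenfunctions of $-\Delta_g$ on $\Sigma$, extended as $t$-independent functions. Since $\partial_t u_j = 0$, the two facts above give $Q_\eps[\sum_j c_j u_j] = 2\sum_j |c_j|^2\lambda_j\|u_j\|^2 + O(\eps)\sum_j|c_j|^2$ and $\|\sum_j c_j u_j\|_{L^2(J_\eps)}^2 = 2\sum_j|c_j|^2\|u_j\|^2 + O(\eps)\sum_j|c_j|^2$, whence the min-max principle yields $(\lambda_\eps)_k \le \lambda_k + O(\eps)$.

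The matching lower bound is the main obstacle. Given the first $k$ eigenfunctions $\varphi_1^\eps,\dots,\varphi_k^\eps$ of $H_\eps$, I apply the averaging projector $\Pi\psi := \tfrac12\int_{-1}^1\psi\,dt$ and observe that it is injective on $E_k := \mathrm{span}\{\varphi_j^\eps\}$: if $\Pi\psi = 0$ then $\psi = \psi_\perp$, so the Poincaré bound combined with $Q_\eps[\psi]\le (\lambda_\eps)_k\|\psi\|^2_{L^2(J_\eps)}$ forces $\|\psi\|^2 \le C\eps^2\,(\lambda_\eps)_k\,\|\psi\|^2$, impossible for small $\eps$ because the upper bound already controls $(\lambda_\eps)_k$. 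Thus $V_k := \Pi(E_k)$ has dimension $k$ in $H^1(\Sigma)$, and it remains to show that the Rayleigh quotient of $-\Delta_g$ on $V_k$ exceeds $(\lambda_\eps)_k$ by at most $O(\eps)$. This is the delicate step: one must re-express $\int_\Sigma |\nabla_g(\Pi\psi)|_g^2\,ds$ in terms of $Q_\eps[\psi]$, absorbing the cross terms produced by $g_\eps-g$, $J_\eps-1$, and the transverse component $\psi_\perp$ through Cauchy--Schwarz and the a priori bound $\|\psi_\perp\|_{L^2} \le C\eps\sqrt{Q_\eps[\psi]}$. With this in hand, min-max applied to $-\Delta_g$ gives $\lambda_k \le (\lambda_\eps)_k + O(\eps)$, completing the proof. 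The $C^3$ hypothesis on $\Sigma$ is exactly what guarantees that every remainder is honestly of order $\eps$ rather than merely $o(1)$.
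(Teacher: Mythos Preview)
Your argument is correct in outline and would yield \eqref{Schatzman}, but it takes a different route from the paper. The paper does not prove Theorem~\ref{Thm.Schatzman} by a direct min--max comparison; instead it deduces the eigenvalue asymptotics as a corollary of the generalised norm-resolvent convergence of Theorem~\ref{MTHM} (with $a=1$, so that $\overline{a}=1$ and $d(\eps)=O(\eps)$), invoking the standard fact that a norm-resolvent bound of order $\eps$ forces the $k$-th eigenvalues to differ by $O(\eps)$. The machinery behind Theorem~\ref{MTHM} is a weak-formulation comparison of the two resolvent equations together with the a~priori estimates of Lemmas~\ref{odhady}--\ref{l3}; in particular the ``delicate step'' you flag---controlling the interaction between $\nabla_g\psi_0$ and $\psi_\perp$---is handled there by an integration by parts that transfers a tangential derivative onto $\psi$ and appeals to the elliptic regularity estimate $\|\Delta_g\psi\|\le C\|F\|$ of Lemma~\ref{nabla}.

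Your variational approach is more elementary and self-contained for the homogeneous case $a=1$: once one observes that the cross term $\int_{\Sigma\times I}\langle\nabla_g\psi_0,\nabla_g\psi_\perp\rangle_g$ vanishes exactly (because $\int_I\psi_\perp\,\mathrm{d}t=0$), the lower bound reduces to the metric comparisons $g_\eps=g+O(\eps)$, $J_\eps=1+O(\eps)$ and the Poincar\'e bound $\|\psi_\perp\|\le C\eps\,Q_\eps[\psi]^{1/2}$, with no second-order regularity of $\psi$ required. The paper's approach, by contrast, buys a strictly stronger conclusion (operator-norm closeness of resolvents, hence convergence of spectral projections and eigenfunctions, and a framework that accommodates arbitrary non-homogeneities~$a$), at the cost of the additional regularity input of Lemma~\ref{nabla}.
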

\noindent The fact that~\eqref{Schatzman} is a consequence of a generalised
norm-resolvent convergence of $\tilde{H}_\eps$ with $a=1$
to~$-\Delta_g$ as $\eps \to 0$ was established in~\cite{KRRS2}.
Actually, the case of general Robin boundary conditions
is considered in that paper. 

The goal of the present paper is to study 
the influence of the non-homogeneity $a \not = 1$
on the effective quantum dynamics in the limit $\eps \to 0$. 
Our primary motivation is the recent paper of
Yachimura~\cite{tosy} in which a special case 
of piece-wise constant non-homogeneity 
in the half-tubular neighbourhoods 
$$
  \Omega_{\varepsilon}^{\pm}
  :=\left\{s \pm \varepsilon t\vec{n}\in \Omega_{\varepsilon} 
  \ | \ t\in (0,1)\right\}
$$
is investigated (see Figure~\ref{fig:test}).
\begin{theorem}[Yachimura~\cite{tosy}]\label{Thm.Yachimura}
If $a(x) := a_\pm$ for almost every $x \in \Omega_{\varepsilon}^{\pm}$
where $a_\pm$ are given positive constants, 
then (for each $k \in \mathbb{N}$)
\begin{equation}\label{tos kon}
   \left(\lambda_{\varepsilon}\right)_{k}
   = \frac{a_{-}+a_{+}}{2}\lambda_{k} + O(\eps)
   \qquad \mbox{as} \qquad
   \eps \to 0.
\end{equation}
\end{theorem}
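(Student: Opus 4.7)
The plan is to prove the eigenvalue asymptotics \eqref{tos kon} as a consequence of a generalised norm-resolvent convergence of $\tilde{H}_\eps$ to the effective operator $\frac{a_-+a_+}{2}(-\Delta_g)$ on $L^2(\Sigma)$, in the spirit of \cite{KRRS2}. The first step is to straighten the tube: via the $C^2$-diffeomorphism $\mathcal{L}_\eps(s,t) := s + \eps t \vec{n}(s)$ from $\Sigma \times I$ onto $\Omega_\eps$, the pulled-back metric has the block-diagonal form $G_\eps = g_\eps \oplus \eps^2$, where $g_\eps = g + O(\eps)$ uniformly on $\Sigma \times I$, the correction being controlled by the Weingarten map of $\Sigma$. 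After the unitary rescaling $U_\eps$ that absorbs the Jacobian and makes the ambient measure $\eps$-independent, the operator $\tilde{H}_\eps$ is transformed into an operator $H_\eps$ on the fixed Hilbert space $L^2(\Sigma \times I, dV_g\, dt)$ of the schematic form
\[
  H_\eps \;=\; -\frac{1}{\eps^2}\,\partial_t(a \, \partial_t) \;-\; \divergence_g(a\,\grad_g) \;+\; R_\eps ,
\]
subject to Neumann conditions at $t = \pm 1$, where $a = a(t)$ is piecewise constant (equal to $a_-$ on $(-1,0)$ and to $a_+$ on $(0,1)$) and $R_\eps$ is a first-order perturbation of norm $O(\eps)$ in a form sense.

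Next comes the transverse analysis. The one-dimensional Neumann problem for $-\partial_t(a(t)\,\partial_t)$ on $(-1,1)$, with natural transmission conditions at $t=0$ (continuity of $u$ and of $a\,\partial_t u$), has ground state the constant function with eigenvalue $0$ and second eigenvalue equal to a strictly positive constant $\mu$ depending only on $a_\pm$. Decomposing any $u \in L^2(\Sigma \times I, dV_g\, dt)$ as $u(s,t) = \varphi(s) + u^\perp(s,t)$ with $\varphi(s) := \tfrac12 \int_{-1}^{1} u(s,t)\, dt$ and $\int_{-1}^{1} u^\perp(s,t)\, dt = 0$, the quadratic form associated to $H_\eps$ splits, up to $O(\eps)$, into a tangential part
\[
  \Bigl(\tfrac12 \int_{-1}^{1} a(t)\, dt\Bigr)\int_\Sigma |\grad_g \varphi|^2\, dV_g
  \;=\; \frac{a_-+a_+}{2}\int_\Sigma |\grad_g \varphi|^2\, dV_g ,
\]
and a transverse part bounded below by $\mu\,\eps^{-2}\,\|u^\perp\|^2$. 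This spectral gap of order $\eps^{-2}$ effectively freezes the dynamics on the zeroth transverse mode, and the averaging over $t$ is the mechanism that singles out the factor $(a_-+a_+)/2$.

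The quantitative bound \eqref{tos kon} then follows by the min-max principle. For the upper bound one uses trial functions $u_j(s,t) := \psi_j(s)$ built from the first $k$ eigenfunctions $\{\psi_j\}_{j=1}^{k}$ of $-\Delta_g$, whose Rayleigh quotient for $H_\eps$ equals $\frac{a_-+a_+}{2}\lambda_j$ up to an error $O(\eps)$ coming from $R_\eps$. For the lower bound, one decomposes each low-lying eigenfunction of $H_\eps$ along the zeroth mode and its orthogonal complement, uses the $\eps^{-2}$-gap to control the complement in energy, and applies min-max to the effective tangential form. The main obstacle I anticipate is a careful treatment of the discontinuity of $a$ at $t=0$: the operator must be constructed through its closed quadratic form so that transmission conditions appear naturally, and one must verify that the metric-induced corrections hidden in $R_\eps$, though they interact with the interface, remain uniformly $O(\eps)$ in the relevant operator norm. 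Once this is settled, the rate $O(\eps)$ reflects the first-order metric expansion $g_\eps = g + O(\eps)$, and Yachimura's formula follows.
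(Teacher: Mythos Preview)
Your overall strategy---straighten the tube, separate transverse and tangential directions, exploit the $\eps^{-2}$ transverse gap, and recover $\tfrac{a_-+a_+}{2}$ by averaging $a(t)$---is the same as the paper's. The paper, however, does not run a min--max argument on eigenfunctions: it establishes the norm-resolvent bound of Theorem~\ref{MTHM} for general~$a$ via resolvent equations and a priori estimates, and then reads off Yachimura's formula as the special case $\overline{a}=\tfrac{a_-+a_+}{2}$, $d(\eps)=0$. (The interface issue you flag is actually harmless here: the quadratic-form definition absorbs the transmission conditions automatically, and since $a$ is constant in~$s$ the hypothesis~\eqref{a3} holds trivially.)

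The genuine gap is in your lower bound. After writing $u=\varphi+u^\perp$ with $\varphi=P_0 u$, the tangential part of the form contains the cross term
\[
2\,\mathrm{Re}\int_\Omega a(t)\,g^{\mu\nu}\,\partial_\mu\overline{\varphi}\,\partial_\nu u^\perp\,\mathrm d\Sigma\wedge\mathrm dt\,.
\]
For $a\equiv1$ this vanishes because $\int_{-1}^1\partial_\nu u^\perp\,\mathrm dt=0$, but for non-constant $a(t)$ it does \emph{not}, and the transverse gap only yields $\|u^\perp\|_{L^2}=O(\eps)$, not $\||\nabla_{g} u^\perp|_g\|=O(\eps)$; a naive Cauchy--Schwarz bound therefore leaves an $O(1)$ error and destroys the rate in~\eqref{tos kon}. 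The paper's remedy (Lemma~\ref{l3}, fed by the elliptic estimate of Lemma~\ref{nabla}) is to integrate by parts in the surface variables, trading $\nabla_{g} u^\perp$ for $u^\perp$ at the price of a $\Delta_g$ on the other factor; this requires $H^2$ control of the \emph{effective} resolvent~$\psi$, available precisely because one works with the resolvent equation~\eqref{d} rather than with eigenfunctions of $H_\eps$. Your sketch lacks this mechanism, and the phrase ``uses the $\eps^{-2}$-gap to control the complement in energy'' is not correct as stated: the tangential energy of $u^\perp$ remains $O(1)$.
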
  

As a matter of fact, under the additional hypothesis
that the $k$-th eigenvalue~$\lambda_k$ is simple,
more precise asymptotics are derived 
in~\cite{Schatzman_1996} and~\cite{tosy}.

\begin{figure}[h!]
\centering
  \includegraphics[width=0.4\textwidth]{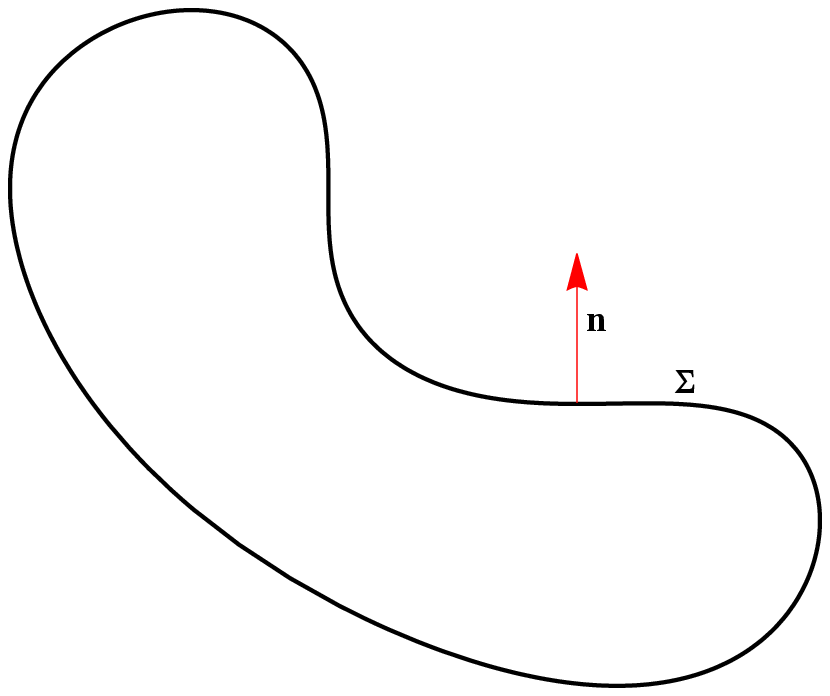}
  \qquad 
  \includegraphics[width=0.48\textwidth]{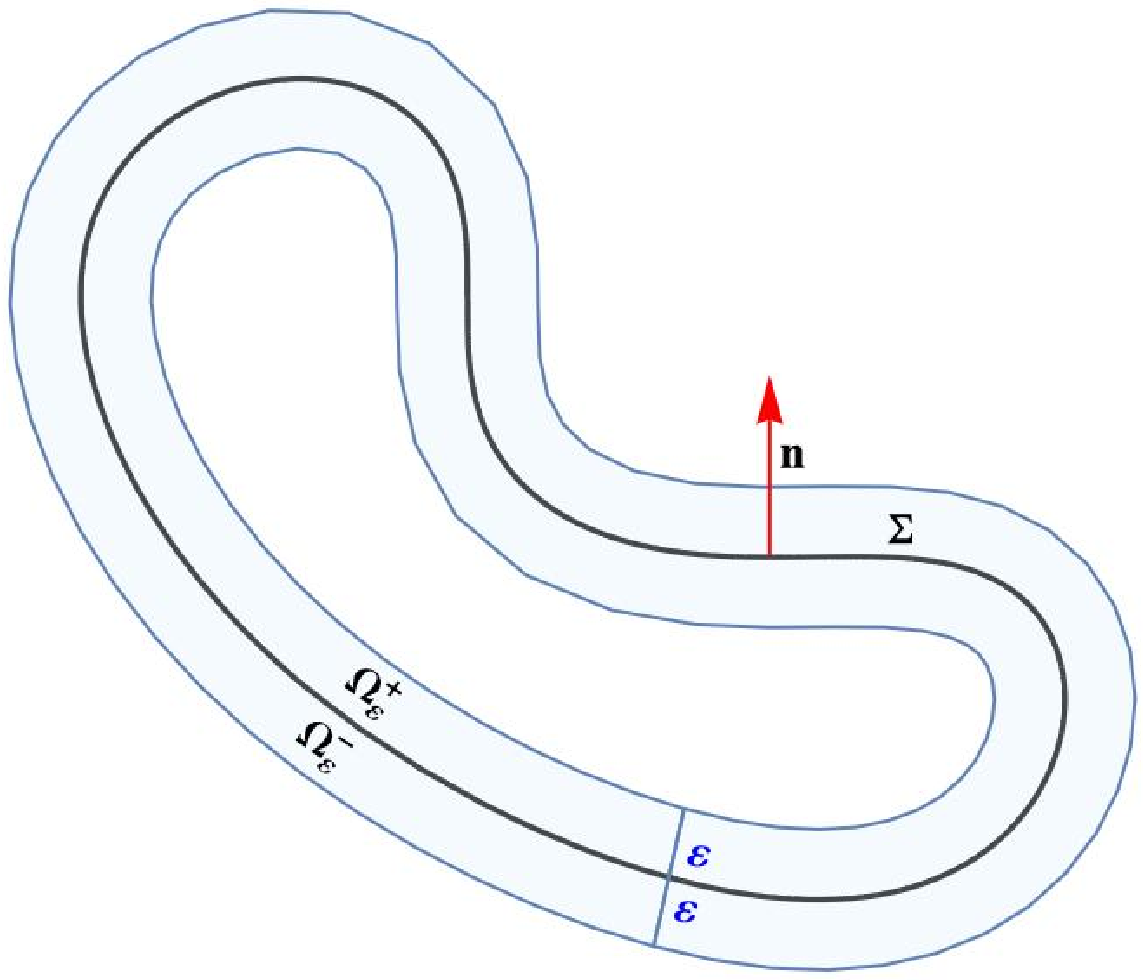}
  \caption{An example of hypersurface $\Sigma$ in $\mathbb{R}^{2}$
and the corresponding tubular neighbourhood $\Omega_{\varepsilon}$.}
\label{fig:test}
\end{figure} 
The principal objective of the present paper is 
to give an insight into the result~\eqref{tos kon}. 
Why the limit eigenvalues are equal to the eigenvalues of 
the Laplace-Beltrami operator multiplified 
by the arithmetic mean of the constants $a_+$ and $a_{-}$?
What is the effective Hamiltonian governing 
the quantum dynamics in the limit $\eps \to 0$?
Second, we wish to understand the convergence of eigenvalues~\eqref{tos kon} 
as a consequence of more robust operator limits.
In particular, we obtain convergence of eigenfunctions as well.
Finally, we go beyond the special piece-wise constant 
scenario of Yachimura's by considering a fully arbitrary non-homogeneity~$a$.

The present paper reveals that the effective dynamics
is given by the operator 
\begin{equation}\label{effective}
   \tilde{H}_{\mathrm{eff}}
   : =- \divergence_g \, \overline{a} \, \grad_g
   \qquad \mbox{in} \qquad L^{2}(\Sigma),
\end{equation}
where (assuming that the limit exists)
\begin{equation}\label{abar}
  \overline{a}(s) := \lim_{\eps \to 0} \ \frac{1}{2}
  \int_{-1}^1 a(s + \varepsilon t\vec{n}) \, \mathrm{d}t.
\end{equation}
In fact, under suitable hypotheses, 
we show that~$\tilde{H}_\eps$ converges to $\tilde{H}_{\mathrm{eff}}$
in a generalised norm-resolvent sense.
This provides an explanation for both 
Theorems~\ref{Thm.Schatzman} and~\ref{Thm.Yachimura}.
Indeed, if the non-homogeneity~$a$ is continuous 
(in particular, if $a=1$), then the effective operator
is given by~\eqref{effective} with~$\overline{a}$ 
being just the projection of~$a$ on the hypersurface~$\Sigma$,
so Theorem~\ref{Thm.Schatzman} follows as a special case. 
In the piece-wise constant scenario of Yachimura's,
it is immediate to see that~$\overline{a}$ is just 
the arithmetic mean of the constants~$a_\pm$,
so Theorem~\ref{Thm.Yachimura} follows as a special case too. 
Of course, our general setting covers much more complicated 
shapes of non-homogeneities, including those with variable~$\overline{a}$.

The generalised norm-resolvent convergence of~$\tilde{H}_\eps$ 
to $\tilde{H}_{\mathrm{eff}}$ is stated as Theorem~\ref{MTHM}
at the end of the paper.
Since the precise statement requires a number of geometric 
and functional preliminaries, 
here we restrict to an informal overview only.
First of all, it is important to deal with the fact 
that the operators~$\tilde{H}_\eps$ and $\tilde{H}_{\mathrm{eff}}$
act on different Hilbert spaces.
Moreover, the former acts on an $\eps$-dependent Hilbert space.
Therefore our first step is to introduce a unitary transform
$$
  \tilde{U}: L^2(\Omega_\eps) \to L^2(\Sigma \times I)
  \,,
$$   
which is realised by a natural change of coordinates
(coming from the very definition~\eqref{domain})
and by subsequently removing the $\eps$-dependent Jacobian.
Our second step is to introduce the isomorphism
\begin{equation}\label{pi}
  \pi: \mathcal{H}_0 := \{ \varphi\otimes  \chi_{0} \ | \ 
  \varphi\in L^{2}(\Sigma) \} \to L^{2}(\Sigma) :
  \{\varphi\otimes  \chi_{0} \to \varphi\} ,
\end{equation}
where $\chi_{0}(t) :=(\sqrt{2})^{-1}$
is the fundamental state of the Neumann Laplacian 
on the one-dimensional interval~$I$.
Then the transformed operator 
$H_{\mathrm{eff}} := \pi^{-1} \tilde{H}_{\mathrm{eff}} \, \pi$
acts in the subspace $\mathcal{H}_0 \subset L^2(\Sigma \times I)$.
With these identifications, Theorem~\ref{MTHM} states that
there exists an $\eps$-independent constant~$C$ such that
\begin{equation}\label{nrs}
  \left\| 
  \tilde{U} (\tilde{H}_{\varepsilon}+1)^{-1} \tilde{U}^{-1}
  -\left(H_{\mathrm{eff}}+1\right)^{-1}\oplus 0^{\perp} 
  \right\| \leq C \max\left\{ \varepsilon, d(\varepsilon)\right\}
\end{equation}
for all sufficiently small~$\eps$.
Here $0^{\perp}$ denotes the zero operator on $\mathcal{H}_0^\bot$ and
$$
  d(\varepsilon) :=
  \underset{s\in \Sigma} {\mathrm{ess}\sup}
  \left| \ 
  \overline{a}(s) - \frac{1}{2}
  \int_{-1}^1 a(s + \varepsilon t\vec{n}) \, \mathrm{d}t
  \ \right|
  .
$$
Since $d(\eps) = O(\eps)$ in the scenarios of both
Theorems~\ref{Thm.Schatzman} and~\ref{Thm.Yachimura},
the norm-resolvent convergence catches the correct decay rate 
in the known convergence of eigenvalues~\eqref{Schatzman} and~\eqref{tos kon}.
What is more, \eqref{nrs}~implies the convergence eigenprojections
(and therefore of suitably normalised eigenfunctions,
see~\cite[Sec.~IV.3.5]{kato}) 
with the same decay rate $O(\eps)$.
We leave as an open problem whether~$d(\eps)$ 
might be the optimal decay rate in less regular scenarios. 

The paper is organised as follows. 
In the forthcoming Section~\ref{GP}, 
we develop geometric preliminaries about the hypersurface~$\Sigma$ 
and its tubular neighbourhoods~$\Omega_\eps$. 
The following Section~\ref{RF} is concerned with the rigorous 
definition of the Hamiltonians~$\tilde{H}_\eps$ and~$\tilde{H}_{\mathrm{eff}}$
as self-adjoint opera-tors and with unitarily equivalent variants of the former.
The norm-resolvent convergence~\eqref{nrs}
is established in Section~\ref{NRC}.
The paper is concluded by Section~\ref{Sec.end},
where we make a couple of remarks about the optimality of our hypotheses.

\section{Geometric preliminaries}\label{GP}
We start with some standard geometric preliminaries following~\cite{kre-tus},
where a similar geometric setting can be found.

Recall that~$\Sigma$ is assumed to be a connected 
orientable compact $C^3$ hypersurface 
embedded in the Euclidean space~$\mathbb{R}^d$ 
of dimension $d \geq 2$. The orientation can be determined by 
a $C^2$-smooth unit normal vector field $\Vec{n}: \Sigma\rightarrow S^{d-1}$ along the hypersurface $\Sigma$. 
Let the Weingarten map be introduced as
$$
  L: T_{s}\Sigma \longrightarrow T_{s}\Sigma : 
  \left\{ 
  \xi\mapsto -\mathrm{d} \vec{n}(\xi)
  \right\} 
$$
for any $s\in\Sigma$. Denote the local coordinate system of $\Sigma$ by $s^{1}, \dots, s^{d-1}$. The map $L$ can be interpreted as a $(1,1)$ mixed tensor with the matrix representation with respect to the basis $\left(\frac{\partial}{\partial s^{1}}, \dots ,\frac{\partial}{\partial s^{d-1}} \right)$. Henceforth, the range of Greek indices is $1,\dots,d-1$, the range of Latin indices is $1,\dots, d$ and the Einstein summation convention is used. Define the second fundamental form 
$h(\xi,\eta) :=\left< L(\xi),\eta \right>$ for any $\xi, \eta \in T_{s}\Sigma$ and $s\in \Sigma$, where $\left< \cdot,\cdot\right>$ is the scalar product 
determined by the induced metric~$g$.
Then the formula $L^{\mu}_{\nu}=g^{\mu \rho}h_{\rho \nu}$ is valid with the notation $( g^{\mu \nu}) = ( g_{\mu \nu})^{-1}$.  
See, e.g., \cite[Chap.~3]{spivak}
for more details on the present geometric discussion.

The eigenvalues of $L$ are known as the principal curvatures $\kappa_{1},\dots, \kappa_{d-1}$ of $\Sigma$. We introduce the positive constant 
$$
  \rho 
  := \left( \max_{\mu} \left\{ 
  \| \kappa_{\mu}\|_{L^{\infty}(\Sigma)} 
  \right\}\right)^{-1}
  ,
$$
which is well defined because~$\Sigma$ is assumed 
to be sufficiently regular and compact. 
The mean curvatures 
$$
  K_{\mu} := 
  \binom{d-1}{\mu}^{-1}\sum_{\alpha_{1}< \dots < \alpha_{\mu}}\kappa_{\alpha_{1}}\dots \kappa_{\alpha_{\mu}}
$$
are invariants of~$L$, see \cite[Sec.~3F]{kuhnel}). 
Furthermore, 
while the principal curvatures are \emph{a priori} defined only locally,
the mean curvatures $K_{\mu}$ 
are globally defined $C^{1}$-smooth functions.

Consider the Cartesian-product manifold 
$\Omega := \Sigma \times I$ 
and the $C^{2}$-smooth mapping  
\begin{equation}
    \label{L}
    \mathcal{L}_\varepsilon :\Omega\longrightarrow \mathbb{R}^{d} : \left\{(s,t)\mapsto s+\varepsilon t \vec{n}  \right\}.
\end{equation}
Then define $\Omega_{\varepsilon} := \mathcal{L}_{\varepsilon}(\Omega)$. 
The mapping $\mathcal{L}_{\varepsilon}$ induces the metric~$G$, 
which can be written in a block form
$$G=g\circ (Id-\varepsilon t L)^{2}+\varepsilon^{2}\mathrm{d}t^{2},$$
where $Id$ is the identity on $T_{s}\Sigma$. 
Consequently, 
$
  \left| G\right| := \mathrm{det}(G)
  =\varepsilon^{2}\left| g\right| f_\eps^2 
$,
where $\left| g \right| := \mathrm{det}(g)$ and
$$
  f_\eps(\cdot,t) := 
  \mathrm{det}(1-\varepsilon t L)
  = \prod_{\mu=1}^{d-1}(1-\varepsilon t \kappa_{\mu})
  = 1+\sum_{\mu=1}^{d-1}(-\varepsilon t)^{\mu} \, 
  \binom{d-1}{\mu} \, K_{\mu} .
$$

By elementary estimates,
for every $(s,t) \in \Omega$,
\begin{multline}\label{crucial}
  c_{\varepsilon}^{-}:=
  2-\left(1+\frac{\varepsilon}{\rho}\right)^{d-1}
  = 
  1 - \sum_{\mu=1}^{d-1}\left(\frac{\varepsilon }{\rho}\right)^{\mu}
  \binom{d-1}{\mu}
  \\
  \leq f_\eps(s,t) \leq
  \\
  1 + \sum_{\mu=1}^{d-1}\left(\frac{\varepsilon }{\rho}\right)^{\mu}
  \binom{d-1}{\mu}=\left(1+\frac{\varepsilon}{\rho}\right)^{d-1}
  =: c_{\varepsilon}^{+} .
\end{multline}
It follows that~$c_-^\eps$ is positive and~$c_+^\eps$ bounded,
both uniformly in~$\eps$, provided that
\begin{equation}\label{overlap1}
    \varepsilon<\rho  
    .
\end{equation}
Consequently, $(\Omega,G)$ is an immersed Riemannian $C^2$ manifold, 
or in other words, $\mathcal{L}_{\varepsilon}: \Omega \to \Omega_\eps$ 
is a local $C^2$-diffeomorphism. 
To make it a global $C^2$-diffeomorphism, 
we additionally assume that
\begin{equation}\label{overlap2}
  \mathcal{L}_{\varepsilon} 
  \hspace{0.2cm}\mathrm{is}\hspace{0.1cm}\mathrm{injective}.
\end{equation}
Then $(\Omega,G)$ is an embedded Riemannian $C^2$ manifold.
In particular, $\Omega_\eps$ is an open set of class~$C^2$.

Obviously, \eqref{overlap1}~is always satisfied 
for all sufficiently small~$\eps$. 
This is also true for~\eqref{overlap2}
because we assume that~$\Sigma$ is compact. 
From now on
(without loss of generality because we are interested 
in the regime of small~$\eps$ anyway),
we shall therefore assume that~$\eps$
is so small that the two conditions~\eqref{overlap1}
and~\eqref{overlap2} are met.
  
Writing $x^{d}:=t$ and 
$\frac{\partial}{\partial x^{d}}:=\frac{\partial}{\partial t}$ for $t\in I$, the metric $G$ can be represented by the matrix
$$
\left(G_{ij} \right) = \left(\begin{matrix} \left(G_{\mu \nu} \right) & 0 \\ 0 & \varepsilon^{2} \end{matrix}\right)
, \hspace{0.8cm} 
G_{\mu \nu}(\cdot,t)=g_{\mu \rho}\left( \delta^{\rho}_{\sigma}-\varepsilon t L^{\rho}_{\sigma}\right)\left( \delta^{\sigma}_{\nu}-\varepsilon t L^{\sigma}_{\nu}\right). $$
Consequently,
\begin{equation}\label{G}
    C_{\varepsilon}^{-} \left(g^{\mu \nu} \right) \leq \left(G^{\mu \nu} \right) \leq C_{\varepsilon}^{+} \left(g^{\mu \nu}\right), \hspace{0.8cm} 
    C_{\varepsilon}^{\pm}:=\left(1\mp \varepsilon\rho^{-1}\right)^{-2},
\end{equation}
where the notation $(G^{\mu \nu})=(G_{\mu \nu} )^{-1}$ is used. Moreover, the assumptions about the hypersurface $\Sigma$ give us an positive constant $B$ such that the relations
\begin{equation}\label{metrika}
    B^{-1} \left(\delta^{\mu \nu} \right) \leq \left(g^{\mu \nu} \right) \leq B \left(\delta^{\mu \nu}\right), \hspace{0.8cm} \left| \frac{\partial g^{\mu\nu}}{\partial s^{\rho}}\right| \leq B
\end{equation}
are valid.

Finally, 
on the Riemannian manifold $(\Omega,G)$,
we introduce the volume element 
\begin{equation}\label{element}
\left| G\right|^{1/2}\left| g\right|^{-1/2} \mathrm{d}\Sigma\wedge\mathrm{d}t
= \varepsilon f_{\varepsilon}\hspace{0.1cm} \mathrm{d}\Sigma\wedge\mathrm{d}t,
\end{equation}
where $ \mathrm{d}\Sigma = \left| g\right|^{1/2} \mathrm{d}s^{1}\wedge\dots \wedge \mathrm{d}s^{d-1}$.

\section{Functional preliminaries}\label{RF}
Since the configuration space~$\Omega_\eps$ is $\eps$-dependent,
the non-homogeneity $a \in L^\infty(\Omega_\eps)$ is allowed 
to be $\eps$-dependent, too.   
However, we always make the following uniform hypothesis:
there exists a positive constant~$c$ such that 
\begin{equation}\label{a0}
  c\leq a(x)< c^{-1}
\end{equation}
for almost every $x \in \Omega_\eps$
and for all sufficiently small~$\eps$.

The operator $\tilde{H}_{\varepsilon}$ formally introduced in~\eqref{operator}
is rigorously defined as the self-adjoint operator associated
in the Hilbert space $L^2(\Omega_\eps)$
with the quadratic form
$$
  \tilde{Q}_{\varepsilon}[\tilde{v}] :=
\int_{\Omega_{\varepsilon}} a(x) \, \left|\nabla \tilde{v}(x)\right|^{2}
  \, \mathrm{d} x 
  , \qquad
  D(\tilde{Q}_{\varepsilon}) :=W^{1,2}(\Omega_{\varepsilon}).
$$
This is well defined due to~\eqref{a0}
and because~$\Omega_{\varepsilon}$ is an open set 
under our geometric hypotheses~\eqref{overlap1} and~\eqref{overlap2}.   

Define the unitary transformation
\begin{equation*}
    \label{U}
    U:L^{2}\left(\Omega_{\varepsilon}\right)
    \to
    L^{2}\big(\Omega,f_{\varepsilon}\,\mathrm{d}\Sigma\wedge\mathrm{d}t\big): 
    \left\{
    \tilde{v} \mapsto v=\sqrt{\varepsilon} \ \tilde{v}\circ \mathcal{L}_{\varepsilon}
    \right\},
\end{equation*}
where $\mathcal{L}_{\varepsilon}$ is given by (\ref{L}). 
We emphasise the presence of the $\eps$-dependent scaling factor,
which makes the $\eps$-dependence of the target space 
concentrated in the function~$f_\eps$ only.  
Then we introduce a unitarily equivalent operator
$H_{\varepsilon} :=U\tilde{H}_{\varepsilon}U^{-1}$
in $L^{2}\big(\Omega,f_{\varepsilon}\,\mathrm{d}\Sigma\wedge\mathrm{d}t\big)$.
By definition, $H_{\varepsilon}$~is associated 
with the quadratic form $Q_\eps[v] := \tilde{Q}_\eps[U^{-1}v]$,
$D(Q_\eps) := U D(\tilde{Q}_\eps)$.
It is straightforward to verify that 
$$
  Q_\eps[v] 
  = \int_{\Omega} a_{\varepsilon}\left[ \left(\frac{\partial \overline{v}}{\partial s^{\mu}}G^{\mu \nu}\frac{\partial v}{\partial s^{\nu}} \right) + \frac{1}{\varepsilon^{2}} \left|{\frac{\partial v}{\partial t}}\right|^2\right] f_{\varepsilon}\hspace{0.1cm} \mathrm{d}\Sigma\wedge\mathrm{d}t,
$$
where $a_{\varepsilon} := a\circ\mathcal{L}_{\varepsilon}$.
By~\eqref{a0}, one has 
\begin{equation}\label{a1}
   c \leq a_\eps(s,t) \leq c^{-1}
\end{equation}
for almost every $(s,t) \in \Omega$
and for all sufficiently small~$\eps$.
Moreover, 
$$
  D(Q_\eps) =
  W^{1,2}\big(\Omega, f_{\varepsilon} \, \mathrm{d}\Sigma\wedge\mathrm{d}t \big)
  = W^{1,2}(\Omega) ,
$$
where the second equality employs~\eqref{crucial} and~\eqref{G}. 

Similarly, the Hilbert spaces 
$L^{2}\big(\Omega,f_{\varepsilon}\,\mathrm{d}\Sigma\wedge\mathrm{d}t\big)$
and $L^2(\Omega)$ can be identified as vector spaces.
However, the respective norms, denoted here by  
$\| \cdot \|_{\eps}$ and $\| \cdot \|_{0}$,
are different (though equivalent).
More specifically, 
\begin{equation}\label{topology}
  c_\eps^- \, \| v \|_{0}
  \leq \| v \|_{\eps} \leq c_\eps^+ \, \| v \|_{0}
\end{equation}
for every $v \in L^2(\Omega)$, 
where~$c_\eps^\pm$ are the constants from~\eqref{crucial}.
Notice that $c_\eps^\pm \to 1$ as $\eps \to 0$.

The effective operator $\tilde{H}_{\mathrm{eff}}$ 
formally introduced in~\eqref{effective}
is rigorously defined as the self-adjoint 
operator associated in the Hilbert space $L^2(\Sigma)$ 
with the quadratic form
$$
  \tilde{Q}_{\mathrm{eff}}[\varphi]
  :=\int_{\Sigma} \overline{a} \ 
  | \nabla_{\!g} \, \varphi |_{g}^{2} \, \mathrm{d}\Sigma
  , \qquad
  D(\tilde{Q}_{\mathrm{eff}}) := W^{1,2}(\Sigma) .
$$
Here we employ the notation 
$| \nabla_{\!g} \, \varphi |_{g}
:= \sqrt{\frac{\partial \varphi}{\partial s^{\mu}}g^{\mu \nu}\frac{\partial \varphi}{\partial s^{\nu}}}$ 
and assume that the limit~\eqref{abar} exists
(sufficient conditions will be discussed later on).
Note that 
\begin{equation}\label{d0}
  \overline{a}
  :=\lim_{\varepsilon\mapsto 0}\frac{\left<a_{\varepsilon} \right>}{2}
  \qquad \mbox{with} \qquad
  \left<a_{\varepsilon} \right\rangle (s)
  :=\int_{-1}^{1} a_{\varepsilon}(s,t) \, f_{\varepsilon}(s,t) \, \mathrm{d} t
  .
\end{equation} 

We shall need to identify $\tilde{H}_{\mathrm{eff}}$  
with an operator in $L^2(\Omega)$. We will conduct it similarly as in \cite[Sect. 5.1]{sed}.
To this aim, 
let $\left\{\chi_{n}\right\}_{n=0}^{\infty}$ 
be the orthonormal basis in $L^{2}(I)$ 
given by the eigenfunctions of the associated Neumann Laplacian:
\begin{equation}\label{zz2}
  \chi_{n}(t):=
  \begin{cases}
  \frac{1}{\sqrt{2}} & \mathrm{for }\hspace{0,2 cm} n=0,
  \\ 
  \cos\left(\frac{n\pi}{2}t\right) & \mathrm{for }\hspace{0,2 cm} n\geq 1 \hspace{0,2cm} \mathrm{ even},
  \\
\sin\left(\frac{n\pi}{2}t\right) & \mathrm{for }\hspace{0,2 cm} n \hspace{0,2cm}\mathrm{odd}.
\end{cases}
\end{equation}
Then any $\psi\in L^{2}\left(\Omega \right)$ can be expanded into the Fourier series 
$$\psi(s,t)=\sum_{n=0}^{\infty}\psi_{n}(s)\chi_{n}(t)=\psi_{0}(s)\chi_{0}(t)+\sum_{n=1}^{\infty}\psi_{n}(s)\chi_{n}(t),$$ 
were $\psi_{n}(s):=\left(\chi_{n},\psi(s,.)\right)_{2}$ 
for any $n\in\mathbb{N}_{0}$.
Here $(\cdot,\cdot)_2$ denotes the inner product in $L^2(I)$;
the associated norm will be denoted by~$\|\cdot\|_2$.
Let $\mathcal{H}_{0}$ be the (closed) 
subspace of $L^2(\Omega)$ defined in~\eqref{pi}.
The operator  
 \begin{equation}
\label{P1}
P_{0}: L^{2}(\Omega)\rightarrow L^{2}(\Omega) : \left\{ \psi \mapsto \psi_{0}\otimes\chi_{0} 
=\left(\int_{-1}^{1}\psi(\cdot,t)\hspace{0.1cm} \mathrm{d}t\right)\otimes\frac{1}{2} \right\}
\end{equation}
realises the orthogonal projection onto $\mathcal{H}_{0}$.
Consequently, one has the orthogonal decomposition
$$
  L^{2}(\Omega)=P_{0}\left(L^{2}(\Omega)\right)\oplus P_{0}^{\perp}\left(L^{2}(\Omega)\right)=\mathcal{H}_{0}\oplus \mathcal{H}_{0}^{\perp}
  ,
$$ 
where $P_{0}^{\perp}:=1-P_{0}$. Additionally, the map $\pi$ defined in~\eqref{pi}
is an isometric isomorphism. 
Then the operator 
$H_{\mathrm{eff}} := \pi^{-1} \tilde{H}_\mathrm{eff} \, \pi$
acts in $\mathcal{H}_{0} \subset L^2(\Omega)$.
The associated quadratic form reads
$$
  Q_{\mathrm{eff}}\left[\varphi\otimes \chi_{0}\right]
  :=  \int_{\Omega}
  \left( \overline{a} \ | \nabla_{\!g}\,\varphi|_{g}^2 \right)
  \otimes |\chi_{0}|^2 \
  \mathrm{d}\Sigma \wedge\mathrm{d}t = \tilde{Q}_{\mathrm{eff}}[\varphi]
  , \qquad
  D\left(Q_{\mathrm{eff}}\right) :=
  \left\{ \varphi\otimes \chi_{0} \ | \ 
  \varphi\in W^{1,2}(\Sigma)
  \right\}.
$$

\section{The norm-resolvent convergence}\label{NRC}

\subsection{Resolvent equations}
Let us introduce yet another unitary transformation: 
\begin{equation} 
    \label{Ue}
     U_{\varepsilon}:L^{2}\left(\Omega,f_{\varepsilon}\hspace{0.1cm}\mathrm{d}\Sigma\wedge\mathrm{d}t \right)\to L^{2}\left(\Omega\right): \left\{v \mapsto  f_{\varepsilon}^{1/2} \, v\right\}.
\end{equation}
Given arbitrary functions 
$F,G\in L^{2}\left(\Omega\right)$,
we introduce $\left\{\psi_{\varepsilon}\right\}_{\varepsilon> 0}\in L^{2}\left(\Omega, f_{\varepsilon}\,\mathrm{d}\Sigma\wedge\mathrm{d}t\right)$ 
and $\psi\in\mathcal{H}_{0}$ 
as solutions of the following resolvent equations:
\begin{align}
    \left(H_{\varepsilon}+1\right)\psi_{\varepsilon}&=U_{\varepsilon}^{-1}G,
    \label{c}
    \\
    \left(H_{\mathrm{eff}}+1\right)\psi&=P_{0}F,
    \label{d}
\end{align}
where $P_{0}$ is defined in~(\ref{P1}). 
The equation~(\ref{c}) precisely means 
$
  Q_{\varepsilon}(v,\psi_{\varepsilon})
  +(v, \psi_{\varepsilon})_{\varepsilon}
  =(v, U_{\varepsilon}^{-1}G)_{\varepsilon}
$
for any test function $v \in L^{2}\left(\Omega, f_{\varepsilon}\,\mathrm{d}\Sigma\wedge\mathrm{d}t\right)$.
Explicitly,
\begin{equation}
    \label{rr}
    \int_{\Omega} a_{\varepsilon}\left[ \frac{\partial \overline{v}}{\partial s^{\mu}}G^{\mu \nu}\frac{\partial \psi_\varepsilon}{\partial s^{\nu}} + \frac{1}{\varepsilon^{2}} {\frac{\partial \overline{v}}{\partial t}}{\frac{\partial \psi_{\varepsilon}}{\partial t}}\right] f_{\varepsilon}\hspace{0.1cm} \mathrm{d}\Sigma\wedge\mathrm{d}t+\int_{\Omega}\overline{v}\psi_{\varepsilon}f_{\varepsilon}\hspace{0.1cm} \mathrm{d}\Sigma\wedge\mathrm{d}t=\int_{\Omega}\overline{v}G f_{\varepsilon}^{1/2}\hspace{0.1cm} \mathrm{d}\Sigma\wedge\mathrm{d}t.
\end{equation}
Setting $v=\psi_{\varepsilon}$,
\begin{equation}
    \label{r1}
     \int_{\Omega} a_{\varepsilon}\left[ \frac{\partial \overline{\psi_\varepsilon}}{\partial s^{\mu}}G^{\mu \nu}\frac{\partial \psi_\varepsilon}{\partial s^{\nu}} + \frac{1}{\varepsilon^{2}} \left|{\frac{\partial \psi_{\varepsilon}}{\partial t}}\right|^2\right] f_{\varepsilon}\hspace{0.1cm} \mathrm{d}\Sigma\wedge\mathrm{d}t+\int_{\Omega}\left|\psi_{\varepsilon}\right|^{2}f_{\varepsilon}\hspace{0.1cm} \mathrm{d}\Sigma\wedge\mathrm{d}t=\int_{\Omega}\overline{\psi_{\varepsilon}}G f_{\varepsilon}^{1/2}\hspace{0.1cm} \mathrm{d}\Sigma\wedge\mathrm{d}t.
\end{equation}
Analogously, the following equality can be obtained from (\ref{d}):
\begin{equation}.
    \label{f}
    \int_{\Omega} \overline{a} \frac{\partial \overline{\psi}}{\partial s^{\mu}}g^{\mu \nu}\frac{\partial \psi}{\partial s^{\nu}} \hspace{0.1cm} \mathrm{d}\Sigma\wedge\mathrm{d}t+\int_{\Omega} \left| \psi\right|^2 \mathrm{d}\Sigma\wedge\mathrm{d}t=\int_{\Omega} \overline{\psi}P_{0}F\hspace{0.1cm} \mathrm{d}\Sigma\wedge\mathrm{d}t
\end{equation}

\subsection{Resolvent estimates}\label{EF}
Henceforth, $C$~denotes a generic positive constant independent of~$\eps$
whose value can change from line to line and $\left\| \cdot \right\|_{\infty}$ stands for $\left\| \cdot \right\|_{L^{\infty}(\Omega)}$.
Moreover, in the estimates below we use
the common notation $\|\cdot\|$ for $\|\cdot\|_\eps$ or $\|\cdot\|_0$;
this is justified due to~\eqref{topology}.

\begin{lemma}
\label{odhady}
If $\left\{\psi_{\varepsilon}\right\}_{\varepsilon> 0}$ is defined by (\ref{c}) for an arbitrary $G\in L^{2}\left(\Omega\right)$, 
then 
\begin{align}
 \left\|\psi_{\varepsilon}\right\| 
 &\leq C \left\|G\right\| ,
 &
 \left\| \left| \nabla_{g}\psi_{\varepsilon} \right|_{g} \right\| 
 &\leq C \left\| G\right\| ,
 \label{0} 
 \\
 \left\| \frac{\partial P_{0}^{\perp}\psi_{\varepsilon}}{\partial t}\right\|
 =\left\| \frac{\partial\psi_{\varepsilon}}{\partial t}\right\| 
 &\leq C \varepsilon\left\| G\right\| ,
 &
 \left\| \left| 
 \nabla_{g}\left(P_{0}^{\perp}\psi_{\varepsilon}\right) \right|_{g} \right\| 
 &\leq C \left\| G\right\| ,
 \label{2}
 \\
 \left\| P_{0}\psi_{\varepsilon}\right\| 
 &\leq C \left\|G\right\| ,
 &
 \left\| \left| \nabla_{g}\left(P_{0}\psi_{\varepsilon}\right) 
 \right|_{g} \right\|    
 & \leq C \left\| G\right\| ,
\end{align}
for all sufficiently small~$\eps$.
If $\psi$ is defined by (\ref{d}) for an arbitrary $F\in L^{2}\left(\Omega\right)$, then 
\begin{equation}
\label{4}
\left\|\psi\right\|\leq C \left\| F\right\|,
\hspace{1.5cm}\left\| \left| \nabla_{g}\psi \right|_{g} \right\| \leq C \left\| F\right\|.
\end{equation}
\end{lemma}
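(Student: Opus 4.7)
My plan is to obtain all bounds as standard energy estimates: test the weak resolvent identities (\ref{r1}) and (\ref{f}) with the solutions themselves, then exploit the uniform coerciveness supplied by the geometric preliminaries together with the uniform positivity of the coefficient.

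For the family $\psi_\eps$, I would start from (\ref{r1}). On the right-hand side, Cauchy--Schwarz yields
\[
  \Big|\int_\Omega \overline{\psi_\eps}\,G\,f_\eps^{1/2}\,\mathrm{d}\Sigma\wedge\mathrm{d}t\Big|
  \leq \|\psi_\eps\|_\eps\,\|G\|_0,
\]
the factor $f_\eps^{1/2}$ splitting naturally between the two arguments. On the left-hand side, the lower bound $a_\eps \geq c$ from (\ref{a1}), the comparison $G^{\mu\nu} \geq C_\eps^{-} g^{\mu\nu}$ from (\ref{G}), and $f_\eps \geq c_\eps^{-}$ from (\ref{crucial}) produce a lower bound of the form
\[
   K\bigl(\bigl\||\nabla_g\psi_\eps|_g\bigr\|_0^2 + \eps^{-2}\|\partial_t\psi_\eps\|_0^2\bigr) + \|\psi_\eps\|_\eps^2,
\]
with an $\eps$-independent constant $K$ (since $c_\eps^\pm, C_\eps^\pm \to 1$). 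Combining, I get $\|\psi_\eps\|_\eps^2 \leq C\|\psi_\eps\|_\eps\|G\|_0$, hence $\|\psi_\eps\| \leq C\|G\|$, and feeding this back yields both the gradient bound and the promised $\eps$-decay of $\|\partial_t \psi_\eps\|$. Throughout, the equivalence (\ref{topology}) legitimises the common notation $\|\cdot\|$ for $\|\cdot\|_\eps$ and $\|\cdot\|_0$.

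The projected estimates are then almost algebraic. Because $P_0\psi_\eps(s,t) = \tfrac{1}{2}\int_{-1}^1 \psi_\eps(s,\tau)\,\mathrm{d}\tau$ is independent of $t$, one immediately has $\partial_t P_0\psi_\eps = 0$, which delivers the identity $\partial_t P_0^\perp \psi_\eps = \partial_t \psi_\eps$ claimed in (\ref{2}). Tangential derivatives commute with the $t$-average, so Cauchy--Schwarz in $t$ followed by Fubini gives $\bigl\||\nabla_g(P_0\psi_\eps)|_g\bigr\|_0^2 \leq \bigl\||\nabla_g\psi_\eps|_g\bigr\|_0^2 \leq C\|G\|^2$, and the analogous bound for $P_0^\perp\psi_\eps$ follows from the orthogonal decomposition and the triangle inequality. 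Since $P_0$ is an orthogonal projection on $L^2(\Omega)$, the $L^2$-bounds on $P_0\psi_\eps$ and $P_0^\perp\psi_\eps$ are immediate consequences of the bound on $\psi_\eps$.

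For (\ref{4}) I would test (\ref{f}) with $v=\psi$. The coefficient $\overline{a}$ inherits the bounds $c \leq \overline{a} \leq c^{-1}$ from (\ref{a0}) by passing to the limit in (\ref{abar}), so the left-hand side dominates $c\bigl\||\nabla_g\psi|_g\bigr\|^2 + \|\psi\|^2$, while Cauchy--Schwarz on the right together with $\|P_0 F\| \leq \|F\|$ closes the argument. I do not anticipate any genuinely hard step: the only subtlety is the careful bookkeeping of the two equivalent but distinct norms $\|\cdot\|_\eps$ and $\|\cdot\|_0$, and the observation that the source of (\ref{rr}) carries $f_\eps^{1/2}$ — precisely the weight needed so that the natural Cauchy--Schwarz pairing produces $\|\psi_\eps\|_\eps$ on one side and $\|G\|_0$ on the other.
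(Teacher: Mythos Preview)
Your argument is correct and follows the same route as the paper: the energy inequality obtained by testing (\ref{r1}) with $\psi_\eps$ (respectively (\ref{f}) with $\psi$), combined with the uniform lower bounds from (\ref{a1}), (\ref{G}) and (\ref{crucial}), yields all the unprojected estimates, and the projected ones then follow by elementary manipulations. The only cosmetic difference is that the paper derives the projected gradient bounds from the Pythagorean identity $\bigl\||\nabla_g\psi_\eps|_g\bigr\|_0^2 = \bigl\||\nabla_g(P_0\psi_\eps)|_g\bigr\|_0^2 + \bigl\||\nabla_g(P_0^\perp\psi_\eps)|_g\bigr\|_0^2$ (exploiting $\int_{-1}^1 \partial_{s^\mu}P_0^\perp\psi_\eps\,\mathrm{d}t=0$) rather than your Jensen-plus-triangle-inequality argument.
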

\begin{proof}
Consider the identity (\ref{r1}). 
From~(\ref{G}), where the constants~$C_\eps^\pm$
can be made $\eps$-independent if~$\eps$ is assumed to be less
than a sufficiently small constant, 
we deduce
$$
 C \int_{\Omega} \left[  \left| \nabla_{g}\psi_{\varepsilon} \right|_{g}^{2} + \frac{1}{\varepsilon^{2}} \left|{\frac{\partial \psi_{\varepsilon}}{\partial t}}\right|^2\right] f_{\varepsilon}\hspace{0.1cm} \mathrm{d}\Sigma\wedge\mathrm{d}t+\int_{\Omega} \left| \psi_{\varepsilon}\right|^2f_{\varepsilon}\hspace{0.1cm} \mathrm{d}\Sigma\wedge\mathrm{d}t\leq\int_{\Omega} \overline{\psi_{\varepsilon}}Gf^{1/2}_{\varepsilon}\hspace{0.1cm} \mathrm{d}\Sigma\wedge\mathrm{d}t. 
$$
Using the Cauchy--Schwarz inequality and~\eqref{crucial},
we obtain  
$$C \left( \left\| \left| \nabla_{g}\psi_{\varepsilon} \right|_{g} \right\|^{2}+ \frac{1}{\varepsilon^{2}}\left\| \frac{\partial\psi_{\varepsilon}}{\partial t}\right\|^{2}  + \left\|\psi_{\varepsilon}\right\|^{2}\right)\leq \left\|\psi_{\varepsilon}\right\|\left\|G\right\|.$$
It implies (\ref{0}) and the first part of (\ref{2}). 
Similarly, estimates in (\ref{4}) follow from (\ref{f}). 

Due to the orthonormality of the basis 
$\left\{\chi_{n}\right\}_{n=0}^{\infty}$ given by (\ref{zz2}), 
the function $P_{0}^{\perp}\psi_{\varepsilon}$ satisfies  
\begin{equation*}
    \label{kkk}
    \int_{-1}^{1} P_{0}^{\perp}\psi_{\varepsilon}\hspace{0.05cm}\mathrm{d}t=0 ,
     \hspace{2 cm}\int_{-1}^{1} \frac{\partial P_{0}^{\perp}\psi_{\varepsilon}}{\partial s^{\mu}}\hspace{0.05cm}\mathrm{d}t=0,
\end{equation*}
where the second equality holds for any $\mu$.
As a consequence, the equalities 
$$
\begin{aligned}
\left\| \psi_{\varepsilon}\right\|_{0} ^2
&=\int_{\Omega}\left[ \left| P_{0}\psi_{\varepsilon}\right| ^2 + \left| P_{0}^{\perp}\psi_{\varepsilon}\right| ^2 \right] \mathrm{d}\Sigma\wedge\mathrm{d}t +2\mathrm{Re}\int_{\Sigma}P_{0}\overline{\psi}_{\varepsilon} \left(\int_{-1}^{1} P_{0}^{\perp} \psi_{\varepsilon} \mathrm{d}t\right)\mathrm{d}\Sigma 
\\
&=\left\| P_{0}\psi_{\varepsilon}\right\|_{0} ^2+\left\| P_{0}^{\perp} \psi_{\varepsilon}\right\|_{0} ^2
\end{aligned}
$$
and 
$$
\begin{aligned}
  \left\| \left| \nabla_{g}\psi_{\varepsilon}\right|_{g}\right\|_{0} ^2
  &=\int_{\Omega} \left[ \left| \nabla_{g}\left(P_{0}\psi_{\varepsilon}\right) \right|_{g}^{2}+\left| \nabla_{g}\left(P_{0}^{\perp} \psi_{\varepsilon}\right)\right|_{g}^{2} \right] \mathrm{d}\Sigma\wedge\mathrm{d}t \
  \\
  &\quad +2\mathrm{Re}\int_{\Sigma}\frac{\partial P_{0}\overline{\psi_\varepsilon}}{\partial s^{\mu}}g^{\mu \nu} \left(\int_{-1}^{1} \frac{\partial P_{0}^{\perp} \psi_\varepsilon}{\partial s^{\nu}} \mathrm{d}t\right)\mathrm{d}\Sigma 
  \\
  &=\left\| \left| \nabla_{g}\left(P_{0}\psi_{\varepsilon}\right) \right|_{g}\right\|_{0} ^2+\left\| \left| \nabla_{g}\left(P_{0}^{\perp} \psi_{\varepsilon}\right)\right|_{g}\right\|_{0} ^2
\end{aligned} 
$$
follow. That is, the remaining estimates of $\psi_{\varepsilon}$ 
have been shown.
\end{proof}

The crucial observation is that the part of~$\psi_\eps$ lying 
in the orthogonal complement~$\mathcal{H}_0^\bot$
vanishes as $\eps \to 0$.

\begin{lemma}\label{ch}
If $\left\{\psi_{\varepsilon}\right\}_{\varepsilon> 0}$ is defined by (\ref{c}) for an arbitrary $G\in L^{2}\left(\Omega\right)$, 
then 
\begin{equation*}
    \label{n4}
    \left\|P_{0}^{\perp}\psi_{\varepsilon}\right\| \leq C \varepsilon\left\| G\right\|
\end{equation*}
for all sufficiently small~$\eps$.
\end{lemma}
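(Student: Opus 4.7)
The plan is to exploit the spectral gap of the one-dimensional Neumann Laplacian on $I=(-1,1)$, combined with the already-proven bound on $\partial_t\psi_\varepsilon$ from Lemma~\ref{odhady}.

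First, I would use the orthonormal basis $\{\chi_n\}_{n=0}^\infty$ from~\eqref{zz2} to write the fiberwise Fourier expansion $\psi_\varepsilon(s,t) = \sum_{n=0}^\infty \psi_{\varepsilon,n}(s)\chi_n(t)$, so that $P_0^\perp\psi_\varepsilon(s,t)=\sum_{n=1}^\infty \psi_{\varepsilon,n}(s)\chi_n(t)$. Since the first nonzero eigenvalue of the Neumann Laplacian on $I$ is $\pi^2/4$, Parseval's identity applied fiberwise yields
$$
\int_{-1}^{1}|P_0^\perp\psi_\varepsilon(s,t)|^2\,\mathrm{d}t
= \sum_{n\geq 1}|\psi_{\varepsilon,n}(s)|^2
\leq \frac{4}{\pi^2}\sum_{n\geq 1}\Bigl(\tfrac{n\pi}{2}\Bigr)^2|\psi_{\varepsilon,n}(s)|^2
= \frac{4}{\pi^2}\int_{-1}^{1}\Bigl|\tfrac{\partial\psi_\varepsilon}{\partial t}(s,t)\Bigr|^2\mathrm{d}t,
$$
which is the standard Poincaré inequality for functions of zero mean on~$I$.

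Next, I would integrate this inequality over $\Sigma$ with respect to $\mathrm{d}\Sigma$. Crucially, $\chi_0=1/\sqrt{2}$ is constant in~$t$, so $P_0\psi_\varepsilon=\psi_{\varepsilon,0}(s)\chi_0$ has vanishing $t$-derivative and therefore $\partial_t(P_0^\perp\psi_\varepsilon)=\partial_t\psi_\varepsilon$. After integration I obtain
$$
\|P_0^\perp\psi_\varepsilon\|_0
\leq \frac{2}{\pi}\,\Bigl\|\tfrac{\partial\psi_\varepsilon}{\partial t}\Bigr\|_0 .
$$
Inserting the bound $\|\partial_t\psi_\varepsilon\|\leq C\varepsilon\|G\|$ from~\eqref{2} (valid with either norm by~\eqref{topology}) immediately yields the desired conclusion $\|P_0^\perp\psi_\varepsilon\|\leq C\varepsilon\|G\|$.

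I do not expect any real obstacle in this argument: it is a clean consequence of the spectral gap of the transverse Neumann Laplacian and Lemma~\ref{odhady}. The only point requiring a moment of care is to pass between the weighted norm $\|\cdot\|_\varepsilon$ and the flat norm $\|\cdot\|_0$, since the Fourier decomposition in~$t$ is orthogonal only with respect to~$\|\cdot\|_0$; but the equivalence~\eqref{topology} with $\varepsilon$-independent constants makes this transition harmless.
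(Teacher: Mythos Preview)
Your proof is correct and follows essentially the same route as the paper: both exploit the spectral gap of the one-dimensional Neumann Laplacian on~$I$ to obtain the fiberwise Poincar\'e inequality $\|P_0^\perp\psi_\varepsilon\|_0 \leq C\|\partial_t\psi_\varepsilon\|_0$, and then invoke the bound~\eqref{2} from Lemma~\ref{odhady}. The only cosmetic difference is that you derive the Poincar\'e inequality via Parseval's identity on the Fourier expansion, whereas the paper phrases it via the minimax characterisation~\eqref{mu01} of the first nonzero eigenvalue~$\mu_1$; these are equivalent.
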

\begin{proof}
Recall that the Neumann Laplacian $-\Delta_{N}^{I}$ in $L^2(I)$
is associated with the quadratic form
$$
  Q[\varphi]:=\int_{-1}^{1}\left| \frac{\mathrm{d}\varphi}{\mathrm{d}t} \right| ^2 \mathrm{d}t, \qquad
  D(Q):=W^{1,2}(I).
$$
Eigenvalues of $-\Delta_{N}^{I}$ are 
 $\mu_{n}:=\left(\frac{n\pi}{2}\right)^{2}$ for $n \in \mathbb{N}_0$
 and the corresponding eigenfunctions $\chi_{n}$ are defined in~(\ref{zz2}).
   The minimax principle for $n=1$ can be equivalently written as
\begin{equation}
    \label{mu01}
    \mu_{1}=\underset{\varphi\perp\chi_{0}}{\min_{\varphi\in W^{1,2}(I)}}\frac{Q[\varphi]}{\left\|\varphi\right\|_{2}^{2}}.
\end{equation}
 Consequently, (\ref{mu01}) implies 
\begin{equation}
    \label{mu1}
    \mu_{1}\leq \frac{\int_{-1}^{1}\left| \frac{\mathrm{d}\varphi}{\mathrm{d}t} \right| ^2 \mathrm{d}t}{\int_{-1}^{1}\left|\varphi\right| ^2 \mathrm{d}t}
\end{equation}
for any $\varphi\in W^{1,2}(I)$ satisfying $\varphi\perp \chi_{0}$ 
in $L^{2}(I)$. Finally, the relations 
$$\left\| P_{0}^{\perp}\psi_{\varepsilon}\right\|_{0}^{2}=\int_{\Omega}\left| P_{0}^{\perp}\psi_{\varepsilon}\right| ^2 \mathrm{d}\Sigma\wedge\mathrm{d}t \leq C\int_{\Omega}\left|{\frac{\partial P_{0}^{\perp}\psi_{\varepsilon}}{\partial t}}\right| ^2 \mathrm{d}\Sigma\wedge\mathrm{d}t=C\left\|{\frac{\partial P_{0}^{\perp}\psi_{\varepsilon}}{\partial t}}\right\|_{0}^{2}\leq C \varepsilon^{2}\left\|G\right\|^{2}_{0}$$
can be obtained from (\ref{mu1})
and the first part of~\eqref{2}.
\end{proof}
\begin{lemma}\label{nabla}
Let $a_{\varepsilon} \in L^\infty(\Omega)$ 
satisfy in addition to~(\ref{a1}) the following hypothesis:
there exists a positive constant~$D$ such that 
\begin{equation}\label{a3}
   \left| \nabla_{\!g} \, a_{\varepsilon}\right|_{g} \leq D
\end{equation}
for almost every $(s,t) \in \Omega$
and all sufficiently small~$\eps$. If $\psi$ is defined by (\ref{d}) for an arbitrary $F\in L^{2}(\Omega)$, then 
\begin{equation}
    \left\|  \Delta_{g}\psi \right\| \leq C \left\|F\right\|
\label{delta}
\end{equation}
for all sufficiently small $\varepsilon$. Here the notation $\Delta_{g}\psi:=\left|g \right| ^{-1/2} \frac{\partial}{\partial s^{\mu}}\left( \left|g \right| ^{1/2}\hspace{0.1cm} g^{\mu \nu}\frac{\partial \psi}{\partial s^{\nu}} \right)$ is engaged.
\end{lemma}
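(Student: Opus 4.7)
The plan is to reduce $\|\Delta_g\psi\|$ to an $L^2(\Sigma)$ estimate on $\varphi := \pi(\psi)$ and then solve~\eqref{d} pointwise for $\Delta_g\varphi$.

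First, since $\psi\in\mathcal{H}_0$, write $\psi = \varphi\otimes\chi_0$ with $\varphi\in W^{1,2}(\Sigma)$. Because $\chi_0 \equiv 1/\sqrt{2}$ is constant in~$t$ with $\|\chi_0\|_2 = 1$, the operator $\Delta_g$ (which acts only in the $s$-variables) commutes with the tensor product, so $\Delta_g\psi = (\Delta_g\varphi)\otimes\chi_0$ and $\|\Delta_g\psi\|_0 = \|\Delta_g\varphi\|_{L^2(\Sigma)}$. Applying~$\pi$ to both sides of~\eqref{d} turns it into
$$\tilde H_{\mathrm{eff}}\varphi + \varphi = F_0 \qquad \mbox{in } L^2(\Sigma),$$
where $F_0(s) := (\chi_0,F(s,\cdot))_2$ satisfies $\|F_0\|_{L^2(\Sigma)} = \|P_0 F\|_0 \leq \|F\|$.

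Next I would expand the effective operator in local coordinates as
$$\tilde H_{\mathrm{eff}}\varphi = -\overline a\,\Delta_g\varphi - g^{\mu\nu}\frac{\partial\overline a}{\partial s^\mu}\frac{\partial\varphi}{\partial s^\nu}.$$
Since the uniform positivity $c \leq \overline a \leq c^{-1}$ passes directly from~\eqref{a1} to~$\overline a$ via~\eqref{d0}, the resolvent equation is equivalent to the pointwise identity
$$\Delta_g\varphi = \overline a^{\,-1}\Bigl(\varphi - F_0 - \bigl\langle \nabla_{\!g}\overline a,\nabla_{\!g}\varphi\bigr\rangle_g\Bigr).$$
Taking the $L^2(\Sigma)$ norm, bounding the last term by Cauchy--Schwarz, and invoking~\eqref{4} (under the identifications $\|\psi\|_0 = \|\varphi\|_{L^2(\Sigma)}$ and $\|\,|\nabla_{\!g}\psi|_g\|_0 = \|\,|\nabla_{\!g}\varphi|_g\|_{L^2(\Sigma)}$) yields
$$\|\Delta_g\varphi\|_{L^2(\Sigma)} \leq c^{-1}\bigl(\|\varphi\|_{L^2(\Sigma)} + \|F_0\|_{L^2(\Sigma)} + \|\,|\nabla_{\!g}\overline a|_g\|_\infty\,\|\,|\nabla_{\!g}\varphi|_g\|_{L^2(\Sigma)}\bigr) \leq C\|F\|,$$
provided $\|\,|\nabla_{\!g}\overline a|_g\|_\infty$ is finite.

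The one piece that deserves care, and which I expect to be the main technical obstacle, is showing that the hypothesis~\eqref{a3} transfers to an analogous bound $\|\,|\nabla_{\!g}\overline a|_g\|_\infty \leq D$ on the limit. For this I would work directly from~\eqref{d0}: \eqref{a3} makes $s\mapsto a_\varepsilon(s,t)$ uniformly (in~$\varepsilon$ and~$t$) Lipschitz along~$\Sigma$ with constant at most~$D$, while $f_\varepsilon$ is smooth with $f_\varepsilon \to 1$ uniformly by~\eqref{crucial}, so the prelimit averages $\tfrac{1}{2}\langle a_\varepsilon\rangle$ are uniformly Lipschitz in~$s$ with constant $D + o(1)$. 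Consequently their pointwise limit $\overline a$ inherits the Lipschitz constant~$D$, hence has a tangential gradient bounded by~$D$ almost everywhere on~$\Sigma$, closing the estimate~\eqref{delta}.
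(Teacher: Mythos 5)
The proposal is correct, and it takes a shorter route than the paper's own proof. The paper's argument has two parts: first it invokes the Representation Theorem together with the (distributional) expansion $H_{\mathrm{eff}}\psi = -\overline{a}\,\Delta_g\psi - g^{\mu\nu}\partial_\mu\overline{a}\,\partial_\nu\psi$ and the bound on the first-order term to conclude $\Delta_g\psi\in\mathcal{H}_0$; it then runs the Nirenberg difference-quotient argument, testing~(\ref{f}) against $-\partial_\rho^{-h}\partial_\rho^h\psi$ and passing to the limit $h\to0$, to bound \emph{all} mixed second derivatives $\|\partial^2\psi/\partial s^\mu\partial s^\rho\|$ by $C\|F\|$, from which~(\ref{delta}) follows. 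You make explicit that the first step already suffices for the statement as written: once the expansion of $\tilde H_{\mathrm{eff}}\varphi$ is available, the resolvent equation can be solved for $\Delta_g\varphi$ and the $L^2$ bound follows directly from~(\ref{4}) and Cauchy--Schwarz, without establishing full $W^{2,2}$ regularity. The difference-quotient method is the more robust route (it yields the standard $W^{2,2}$ elliptic regularity for Lipschitz coefficients, which would also justify the integration by parts used in Lemma~\ref{l3}), but for the specific combination $\Delta_g\psi$ your shortcut works.

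One step in your argument deserves to be spelled out. The expansion $\tilde H_{\mathrm{eff}}\varphi = -\overline{a}\,\Delta_g\varphi - g^{\mu\nu}\partial_\mu\overline{a}\,\partial_\nu\varphi$ holds a priori only in the sense of distributions, since the form domain gives merely $\varphi\in W^{1,2}(\Sigma)$. To upgrade ``$\overline{a}\,\Delta_g\varphi$ equals the $L^2$ function $-\tilde H_{\mathrm{eff}}\varphi - \nabla_{\!g}\overline{a}\cdot\nabla_{\!g}\varphi$'' to the pointwise identity $\Delta_g\varphi=\overline{a}^{-1}(\cdots)\in L^2$, one must divide the distribution $\Delta_g\varphi$ by $\overline{a}$. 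This is valid because $\Delta_g\varphi$ extends to a bounded functional on $W^{1,2}(\Sigma)$ via $u\mapsto-\int\nabla_{\!g}\varphi\cdot\nabla_{\!g}u$, and because $\overline{a}^{-1}$ is Lipschitz (by your final paragraph), so after a density argument one may substitute $u=v/\overline{a}$ for smooth $v$ — but it is exactly the place where the hypothesis~(\ref{a3}) is used in an essential way, and the ``equivalence'' you assert should be justified along these lines. Your closing paragraph transferring~(\ref{a3}) to an $L^\infty$ bound on $|\nabla_{\!g}\overline{a}|_g$ via uniform Lipschitz estimates on the prelimit averages is sound, and in fact makes explicit a fact the paper uses without proof (its estimate involves $\|\,|\nabla_{\!g}\overline{a}|_g\,\|_{L^\infty(\Sigma)}$).
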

\begin{proof}
The proof follows the idea of \cite[proof~of~Lemma~3.3]{deltakrej}. First, the Representation Theorem (cf. \cite[Thm.~VI.1.27]{kato}) directly implies $H_{\mathrm{eff}}\psi\in \mathcal{H}_{0}$. Moreover, both $H_{\mathrm{eff}}\psi=-\overline{a}\Delta_{g} \psi-\frac{\partial\overline{a}}{\partial s^{\mu}}g^{\mu\nu}\frac{\partial\psi}{\partial s^{\nu}}$ and 
$$
\int_{\Sigma} \left| \frac{\partial\overline{a}}{\partial s^{\mu}}g^{\mu\nu}\frac{\partial\psi}{\partial s^{\nu}}\right|^2 \mathrm{d} \Sigma  \leq \| \left| \nabla_{g}\overline{a}\right|_{g}\|^{2}_{L^{\infty}(\Sigma)} \| \left| \nabla_{g}\psi\right|_{g}\|^{2}_{2} \leq C \| \left| \nabla_{g}\psi\right|_{g}\|^{2}_{2}
$$
are satisfied based on the assumptions. Consequently, $\Delta_{g}\psi\in \mathcal{H}_{0}$.\\
\indent Consider the test function $-\partial^{-h}_{\rho}\partial^{h}_{\rho}\psi$, where
$$
\partial^{h}_{\rho}\psi(s,t):=\frac{\psi(s^{1},\dots, s^{\rho}+h,\dots,s^{d-1},t)-\psi(s,t)}{h}
$$
is the $\rho$th difference quotient of size $h>0$, see, e.g., \cite[Sec.~5.8.2]{evans2}. From the equation (\ref{f}) rewritten in the form sense for the test function $-\partial^{-h}_{\rho}\partial^{h}_{\rho}\psi$, one can obtain
\begin{equation}
    \begin{aligned}
        \int_{\Omega} \overline{a}\frac{\partial^{2}\overline{\psi}}{\partial s^{\mu}\partial s^{\rho}}g^{\mu\nu}\frac{\partial^{2}\psi}{\partial s^{\nu}\partial s^{\rho}}\mathrm{d}\Sigma\wedge\mathrm{d}t &+ I_{\psi}\ \\
        &-\int_{\Omega}\frac{\partial^{2}\overline{\psi}}{\partial s^{\rho}\partial s^{\rho}}\psi\hspace{0.05cm}\mathrm{d}\Sigma\wedge\mathrm{d}t=-\int_{\Omega}\frac{\partial^{2}\overline{\psi}}{\partial s^{\rho}\partial s^{\rho}}P_{0}F\hspace{0.05cm}\mathrm{d}\Sigma\wedge\mathrm{d}t
    \end{aligned}
    \label{III}
\end{equation}
using the analogue of the integration by parts for the difference quotients and taking the limit $h\rightarrow 0$. See, e.g., \cite[proof~of~Thm.~5.8.3]{evans2} for more details on the ``integration by parts'' rule. The term $I_{\psi}$ is given by
$$
I_{\psi}:=\int_{\Omega} \frac{\partial^{2}\overline{\psi}}{\partial s^{\mu}\partial s^{\rho}}\left(g^{\mu\nu}\frac{\partial\overline{a}}{\partial s^{\rho}}+\frac{\partial g^{\mu\nu}}{\partial s^{\rho}}\overline{a}+g^{\mu\nu}\overline{a}\frac{\partial \left|g\right|^{1/2}}{\partial s^{\rho}}\left|g\right|^{-1/2}\right)\frac{\partial\psi}{\partial s^{\nu}}\mathrm{d}\Sigma\wedge\mathrm{d}t.
$$
Following estimates can be proven due to the assumptions, (\ref{metrika}) and (\ref{4}):  
\begin{align}
\left| \int_{\Omega}\frac{\partial^{2}\overline{\psi}}{\partial s^{\rho}\partial s^{\rho}}P_{0}F\hspace{0.05cm}\mathrm{d}\Sigma\wedge\mathrm{d}t\right|&\leq \delta\left\| \frac{\partial^{2}\psi}{\partial s^{\rho}\partial s^{\rho}}\right\|_{0}^{2}+\delta^{-1}\| F\|_{0}^{2},\
\label{Ia}
\\
\left| \int_{\Omega}\frac{\partial^{2}\overline{\psi}}{\partial s^{\rho}\partial s^{\rho}}\psi\hspace{0.05cm}\mathrm{d}\Sigma\wedge\mathrm{d}t\right| &\leq \delta\left\| \frac{\partial^{2}\psi}{\partial s^{\rho}\partial s^{\rho}}\right\|_{0}^{2}+\delta^{-1}\| \psi\|_{0}^{2}\leq \delta\left\| \frac{\partial^{2}\psi}{\partial s^{\rho}\partial s^{\rho}}\right\|_{0}^{2}+C\delta^{-1}\| F\|_{0}^{2},\
\label{Ib}
\\
\left|I_{\psi}\right| &\leq \delta\left\| \frac{\partial^{2}\psi}{\partial s^{\mu}\partial s^{\rho}}\right\|_{0}^{2}+C\delta^{-1}\left\| \frac{\partial\psi}{\partial s^{\nu}}\right\|_{0}^{2}\leq \delta\left\| \frac{\partial^{2}\psi}{\partial s^{\mu}\partial s^{\rho}}\right\|_{0}^{2}+C\delta^{-1}\| F\|_{0}^{2}.\
\label{Ic}
\end{align}
The first terms of the final estimates can be treated as a perturbation of the first term of (\ref{III}) considering sufficiently small positive $\delta$. Additionally, we can gain
\begin{equation}
   \int_{\Omega} \overline{a}\frac{\partial^{2}\overline{\psi}}{\partial s^{\mu}\partial s^{\rho}}g^{\mu\nu}\frac{\partial^{2}\psi}{\partial s^{\nu}\partial s^{\rho}}\mathrm{d}\Sigma\wedge\mathrm{d}t\geq C\left\| \frac{\partial^{2}\psi}{\partial s^{\mu}\partial s^{\rho}}\right\|^{2}
    \label{Id}
\end{equation}
using the asumptions and (\ref{metrika}). Relations (\ref{Ia})--(\ref{Id}) directly implies 
$$
\left\| \frac{\partial^{2}\psi}{\partial s^{\mu}\partial s^{\rho}}\right\| \leq C\|F\|
$$
and the estimate (\ref{delta}) is shown.
\end{proof}
\begin{lemma}\label{l3}
Let $a_{\varepsilon} \in L^\infty(\Omega)$ 
satisfy both~(\ref{a1}) and (\ref{a3}) for almost every $(s,t) \in \Omega$ and all sufficiently small~$\eps$.
If $\left\{\psi_{\varepsilon}\right\}_{\varepsilon> 0}$ 
and~$\psi$ are defined by (\ref{c}) and (\ref{d})
for an arbitrary $G\in L^{2}\left(\Omega\right)$
and $F\in L^{2}\left(\Omega\right)$,
respectively, 
then the inequality
\begin{equation}
    \label{p}
    \left| \int_{\Omega} a_{\varepsilon} \frac{\partial \overline{\psi}}{\partial s^{\mu}}G^{\mu \nu}\frac{\partial P_{0}^{\perp} \psi_\varepsilon}{\partial s^{\nu}} f_{\varepsilon}^{1/2}\hspace{0.1cm} \mathrm{d}\Sigma\wedge\mathrm{d}t \right|  \leq C \varepsilon \|F\| \left\| G \right\|
\end{equation}
is valid for all sufficiently small~$\eps$.
\end{lemma}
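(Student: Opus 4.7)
The plan is to integrate by parts in the tangential variables $s^\nu$ so as to transfer the derivative off $P_0^\perp \psi_\varepsilon$, and then exploit the smallness of $\|P_0^\perp \psi_\varepsilon\|$ provided by Lemma~\ref{ch}. Since $\Sigma$ is a compact manifold without boundary, the integration by parts on $\Sigma$ produces no boundary terms, and the integration in $t$ remains intact.

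More precisely, writing the volume element as $\mathrm{d}\Sigma \wedge \mathrm{d}t = |g|^{1/2} \mathrm{d}s \wedge \mathrm{d}t$, I would rewrite the integral on the left-hand side of~(\ref{p}) in the form
$$
 I_\eps := \int_{\Omega} \frac{\partial \overline{\psi}}{\partial s^{\mu}} \, A_\eps^{\mu\nu} \, \frac{\partial P_{0}^{\perp} \psi_\varepsilon}{\partial s^{\nu}} \, \mathrm{d}s \wedge \mathrm{d}t,
\qquad
 A_\eps^{\mu\nu} := a_\varepsilon \, G^{\mu\nu} \, f_\varepsilon^{1/2} \, |g|^{1/2},
$$
and integrate by parts in $s^\nu$ to obtain
$$
 I_\eps = -\int_{\Omega} \left[ \frac{\partial^2 \overline{\psi}}{\partial s^{\mu} \partial s^\nu} \, A_\eps^{\mu\nu} + \frac{\partial \overline{\psi}}{\partial s^{\mu}} \, \frac{\partial A_\eps^{\mu\nu}}{\partial s^\nu} \right] P_{0}^{\perp} \psi_\varepsilon \, \mathrm{d}s \wedge \mathrm{d}t.
$$
This is justified because Lemma~\ref{nabla} gives $\psi \in W^{2,2}(\Sigma)$, and the coefficients $G^{\mu\nu}$, $f_\varepsilon^{1/2}$, $|g|^{1/2}$ are $C^1$ thanks to $\Sigma\in C^3$, while the weak derivative of $a_\varepsilon$ is controlled by~(\ref{a3}).

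I would then apply the Cauchy--Schwarz inequality to each of the two terms. For the first term, Lemma~\ref{nabla} gives $\|\partial^2 \psi/\partial s^\mu \partial s^\nu\| \leq C\|F\|$, while $A_\eps^{\mu\nu}$ is uniformly bounded by~(\ref{a1}), (\ref{G}), (\ref{crucial}) and~(\ref{metrika}). For the second term, I would use~(\ref{a3}) to bound $\partial a_\varepsilon/\partial s^\nu$, and the $C^1$-regularity of $G^{\mu\nu}$, $f_\varepsilon^{1/2}$ and $|g|^{1/2}$ (which are uniformly bounded in $\eps$ together with their first $s$-derivatives under our geometric assumptions) to bound $\partial A_\eps^{\mu\nu}/\partial s^\nu$; combined with the bound $\|\nabla_{\!g}\psi\| \leq C\|F\|$ from~(\ref{4}), this gives a constant multiple of $\|F\|$. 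In both cases the remaining $L^2$-factor is $\|P_0^\perp \psi_\varepsilon\|$, and applying Lemma~\ref{ch} yields
$$
 |I_\eps| \leq C \|F\| \, \|P_0^\perp \psi_\varepsilon\| \leq C\varepsilon \|F\| \|G\|,
$$
which is the desired inequality~(\ref{p}).

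The main obstacle is the bookkeeping on the coefficient $A_\eps^{\mu\nu}$: one must check that its $s$-derivative is uniformly bounded in $\eps$, which requires the $C^1$-regularity of the mean curvatures $K_\mu$ entering $f_\eps$ (guaranteed by the $C^3$-assumption on $\Sigma$), the smoothness of $|g|^{1/2}$ and $G^{\mu\nu}$ as functions of $(s,t)$ with uniform control in $t\in I$, and the hypothesis~(\ref{a3}) for the non-homogeneity. Once this uniform control of the coefficients is in place, the rest is a direct application of Cauchy--Schwarz together with Lemmas~\ref{odhady},~\ref{ch} and~\ref{nabla}.
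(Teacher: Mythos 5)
Your central idea — integrate by parts in $s$ so that the derivative comes off $P_0^\perp\psi_\varepsilon$, and then close the estimate with Lemma~\ref{ch} — is exactly the paper's strategy, and your argument is sound in outline. The paper, however, begins with a small preprocessing step that your all-in-one version omits: it first writes $G^{\mu\nu}=(G^{\mu\nu}-g^{\mu\nu})+g^{\mu\nu}$ and treats the $O(\varepsilon)$ correction $G^{\mu\nu}-g^{\mu\nu}$ by a direct Cauchy--Schwarz estimate (using only the gradient bounds of Lemma~\ref{odhady}), performing the integration by parts only on the $g^{\mu\nu}$ piece. This split buys two things. First, it lets the paper use the divergence structure $\partial_\nu\bigl(|g|^{1/2}g^{\mu\nu}\partial_\mu\psi\bigr)=|g|^{1/2}\Delta_g\psi$, so the $a$ priori bound it invokes is precisely the one stated in Lemma~\ref{nabla}, namely $\|\Delta_g\psi\|\leq C\|F\|$; your route instead relies on the full Hessian bound $\bigl\|\partial^2\psi/\partial s^\mu\partial s^\rho\bigr\|\leq C\|F\|$, which is established inside the proof of Lemma~\ref{nabla} but is not part of its conclusion, so it should be cited with care. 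Second, and more importantly, the paper never has to differentiate $G^{\mu\nu}$ in $s$. Your estimate of $\partial A_\varepsilon^{\mu\nu}/\partial s^\nu$ requires a uniform-in-$\varepsilon$ bound on $\partial G^{\mu\nu}/\partial s^\rho$, which you assert but do not derive; it is true (since $G_{\mu\nu}$ is built from $g$, which is $C^2$, and the Weingarten map $L$, which is $C^1$, both on a compact $\Sigma$, and the inverse is controlled uniformly through~(\ref{G})), but the paper only records derivative control for $g^{\mu\nu}$ in~(\ref{metrika}), so this is a genuine extra step you would need to supply explicitly. With that single addition your proposal gives the claimed bound by essentially the same mechanism as the paper.
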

\begin{proof}
Let us write
$$
 \int_{\Omega} a_{\varepsilon} \frac{\partial \overline{\psi}}{\partial s^{\mu}}G^{\mu \nu}\frac{\partial P_{0}^{\perp}\psi_\varepsilon}{\partial s^{\nu}} f_{\varepsilon}^{1/2} \mathrm{d}\Sigma\wedge\mathrm{d}t =  I_{1} + I_{2} 
$$ 
with
\begin{equation*}
\begin{aligned}
I_{1} &:= \int_{\Omega} a_{\varepsilon}  \frac{\partial \overline{\psi}}{\partial s^{\mu}}(G^{\mu \nu}-g^{\mu\nu})\frac{\partial P_{0}^{\perp}\psi_\varepsilon}{\partial s^{\nu}} f_{\varepsilon}^{1/2}\hspace{0.1cm} \mathrm{d}\Sigma\wedge\mathrm{d}t \,,
\\
I_{2} &:= \int_{\Omega} a_{\varepsilon} \frac{\partial \overline{\psi}}{\partial s^{\mu}}g^{\mu \nu}\frac{\partial P_{0}^{\perp}\psi_\varepsilon}{\partial s^{\nu}} f_{\varepsilon}^{1/2} \mathrm{d}\Sigma\wedge\mathrm{d}t \,.
\end{aligned}
\end{equation*}
Using $G^{\mu \nu}-g^{\mu\nu} = O(\eps)$ as $\eps\mapsto 0$, the integral $I_{1}$ can be estimated as follows
\begin{equation}\label{J1}
  |I_1| \leq C \eps \left\| a_{\varepsilon}\right\|_{\infty} \| | \nabla_g \psi |_{g} \| \left\| | \nabla_g \left(P_{0}^{\perp}\psi_\eps\right)|_{g}\right\| 
  \leq C \eps \|F\| \|G\| 
  \,.
\end{equation}
\indent Due to (\ref{delta}), $\Delta_{g}\psi\in \mathcal{H}_{0}$. Consequently, the integration by parts yields the first line. The final estimation follows from Lemmas \ref{ch}, \ref{nabla} and (\ref{4}). 
\begin{equation}\label{J2}
\begin{aligned}
   | I_{2} | &= \left| \int_{\Omega}\big[ a_{\varepsilon} (\Delta_{g}\psi) P_{0}^{\perp}\psi_\varepsilon f_{\varepsilon}^{1/2} +  \frac{\partial \overline{\psi}}{\partial s^{\mu}}g^{\mu \nu}\frac{\partial a_\varepsilon}{\partial s^{\nu}} P_{0}^{\perp}\psi_{\varepsilon} f_{\varepsilon}^{1/2}+ a_{\varepsilon} \frac{\partial \overline{\psi}}{\partial s^{\mu}}g^{\mu \nu}\frac{\partial f_{\varepsilon}^{1/2}}{\partial s^{\nu}} P_{0}^{\perp}\psi_\varepsilon \big] \mathrm{d}\Sigma\wedge\mathrm{d}t \right| \
   \\
   &\leq C \left[ \left\| a_{\varepsilon}\right\|_{\infty}\left\| \Delta_{g}\psi \right\| +\left( \left\| \left| \nabla_{g}a_{\varepsilon}\right|_{g} \right\|_{\infty} +\left\| \left| \nabla_{g}(f_{\varepsilon}^{1/2})\right|_{g} \right\|_{\infty}\right)  \left\| \left| \nabla_{g}\psi \right|_{g} \right\| \right] \left\| P_{0}^{\perp}\psi_{\varepsilon}\right\| 
   \\
   &\leq C \eps \|F\| \|G\| 
  \,
\end{aligned}    
\end{equation}
Combining (\ref{J1}) and (\ref{J2}), the estimate (\ref{p}) is proven.
\end{proof}
\subsection{The main result and its proof}\label{NRR}
Now we are in a position to state the main result of this paper.
Recall the notations $\left<a_{\varepsilon}\right>$ and $\overline{a}$
introduced in~(\ref{d0}).

\begin{theorem}\label{MTHM}
Let $a_{\varepsilon} \in L^\infty(\Omega)$ be satisfying~(\ref{a1}) and (\ref{a3}). 
In addition, assume  
\begin{equation}\label{a4}
  d(\varepsilon):=
  \left\|
  \left<{a}_{\varepsilon}\right> - 2 \overline{a} 
  \right\|_{L^\infty(\Sigma)}
  \xrightarrow[\varepsilon\mapsto 0]{} 0
  .
\end{equation}
 Then there exists a positive constant~$C$ such that
\begin{equation*}
    \label{rk}
    \left\| U_{\varepsilon}\left(H_{\varepsilon}+1\right)^{-1}U_{\varepsilon}^{-1}-  \left(H_{\mathrm{eff}}+1\right)^{-1}\oplus 0^{\perp} \right\|_{0\rightarrow 0} \leq C \max\left\{ \varepsilon, d(\varepsilon)\right\}
\end{equation*}
for all sufficiently small~$\eps$,
where $0^{\perp}$ denotes the zero operator on $\mathcal{H}_{0}^{\perp}$
and $\| \cdot\|_{0\rightarrow 0}$ stands for 
the operator norm in $L^2(\Omega)$.
\end{theorem}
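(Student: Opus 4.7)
The plan is to establish the operator-norm bound by duality. Given $G \in L^{2}(\Omega)$, let $\psi_{\varepsilon}$ and $\psi$ solve the resolvent equations~(\ref{c}) and~(\ref{d}) with source $G$ (taking $F := G$ in the latter), and set $R := P_{0}\psi_{\varepsilon} - \psi \in \mathcal{H}_{0}$. Using $U_{\varepsilon}\psi_{\varepsilon} = f_{\varepsilon}^{1/2}\psi_{\varepsilon}$, the pointwise estimate $|f_{\varepsilon}^{1/2}-1| \leq C\varepsilon$ from~(\ref{crucial}), the uniform bound $\|P_{0}\psi_{\varepsilon}\|_{0} \leq C\|G\|_{0}$ from Lemma~\ref{odhady}, and $\|P_{0}^{\perp}\psi_{\varepsilon}\|_{0} \leq C\varepsilon\|G\|_{0}$ from Lemma~\ref{ch}, the triangle inequality yields
$$
  \|U_{\varepsilon}\psi_{\varepsilon} - \psi\|_{0} \leq \|R\|_{0} + C\varepsilon\|G\|_{0}.
$$
Hence the theorem reduces to $\|R\|_{0} \leq C\max\{\varepsilon, d(\varepsilon)\}\|G\|_{0}$.

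To bound $\|R\|_{0}$ by duality, I fix an arbitrary auxiliary $F \in L^{2}(\Omega)$ and let $\Phi := (H_{\mathrm{eff}}+1)^{-1}P_{0}F$ be the associated solution of~(\ref{d}). Because $\Phi \in \mathcal{H}_{0}$ is independent of $t$, using $\Phi$ as test function in the weak form~(\ref{rr}) of~(\ref{c}) makes the $1/\varepsilon^{2}$ contribution drop out. Using the same $\Phi$ as test function in~(\ref{f}) (now with source $G$) and subtracting the two identities, together with the decomposition $\psi_{\varepsilon} = \psi + R + P_{0}^{\perp}\psi_{\varepsilon}$, produces an identity of the form
$$
  Q_{\mathrm{eff}}(\Phi, R) + (\Phi, R)_{0} = \mathcal{E}(F, G),
$$
where $\mathcal{E}(F,G)$ collects all remaining integrals.

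The heart of the argument is to show $|\mathcal{E}(F,G)| \leq C\max\{\varepsilon, d(\varepsilon)\}\|F\|_{0}\|G\|_{0}$. Since $\Phi$, $R$, $\psi$ depend only on $s$, every integral over $\Omega$ containing them reduces, after integrating in $t$, to a surface integral weighted by $\int_{-1}^{1}a_{\varepsilon}G^{\mu\nu}f_{\varepsilon}\,\mathrm{d}t$ or by $\int_{-1}^{1}(f_{\varepsilon}-1)\,\mathrm{d}t$. The central coefficient comparison is
$$
  \int_{-1}^{1}\bigl[a_{\varepsilon}G^{\mu\nu}f_{\varepsilon} - \overline{a}\, g^{\mu\nu}\bigr]\,\mathrm{d}t
  = \bigl(\langle a_{\varepsilon}\rangle - 2\overline{a}\bigr)g^{\mu\nu}
  + \int_{-1}^{1} a_{\varepsilon}(G^{\mu\nu} - g^{\mu\nu})f_{\varepsilon}\,\mathrm{d}t,
$$
whose first piece is uniformly $O(d(\varepsilon))$ by hypothesis~(\ref{a4}) and whose second piece is $O(\varepsilon)$ by~(\ref{G}); similarly, $|f_{\varepsilon}-1|\leq C\varepsilon$ controls the mass-type errors. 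The single cross term involving $\partial_{s^{\nu}}P_{0}^{\perp}\psi_{\varepsilon}$ is handled by Lemma~\ref{l3} applied to $\Phi$, after writing $f_{\varepsilon} = f_{\varepsilon}^{1/2} + f_{\varepsilon}^{1/2}(f_{\varepsilon}^{1/2}-1)$ to reconcile the $f_{\varepsilon}$-weight here with the $f_{\varepsilon}^{1/2}$-weight there. The corresponding mass term with $P_{0}^{\perp}\psi_{\varepsilon}$ vanishes up to an $O(\varepsilon^{2})$ correction thanks to the orthogonality of $P_{0}^{\perp}\psi_{\varepsilon}$ to $\mathcal{H}_{0}$ combined with Lemma~\ref{ch}.

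Finally, testing~(\ref{f}) (with source $F$) against $R$---which lies in $D(Q_{\mathrm{eff}})$ since $\psi_{\varepsilon} \in W^{1,2}(\Omega)$ makes $P_{0}\psi_{\varepsilon}$ an element of $W^{1,2}(\Sigma)$ by Fubini---gives $Q_{\mathrm{eff}}(R,\Phi) + (R,\Phi)_{0} = (R,F)_{0}$, whose complex conjugate coincides with the left-hand side of the identity above by the Hermitian symmetry of $Q_{\mathrm{eff}}$. Combining the two yields $|(F,R)_{0}| \leq C\max\{\varepsilon, d(\varepsilon)\}\|F\|_{0}\|G\|_{0}$, and taking the supremum over $F$ produces the desired bound on $\|R\|_{0}$. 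The main obstacle I anticipate is the careful bookkeeping in the coefficient-averaging step: one must separate the averaging defect $\langle a_{\varepsilon}\rangle - 2\overline{a}$ (which is only $O(d(\varepsilon))$, and need not be $O(\varepsilon)$) from the purely geometric $O(\varepsilon)$ corrections in $G^{\mu\nu}-g^{\mu\nu}$ and $f_{\varepsilon}-1$, so that the final rate emerges precisely as $\max\{\varepsilon, d(\varepsilon)\}$ rather than being spoilt by a worse factor.
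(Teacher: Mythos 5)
Your proposal is correct and arrives at the same bound, but it takes a genuinely different route to the duality estimate. The paper pairs independent $F$ and $G$ directly, expresses $\bigl(F,(U_{\varepsilon}(H_{\varepsilon}+1)^{-1}U_{\varepsilon}^{-1}-(H_{\mathrm{eff}}+1)^{-1}\oplus 0^{\perp})G\bigr)_{0}$ through the two sesquilinear forms, and in doing so tests the $\varepsilon$-form against $U_{\varepsilon}^{-1}\psi=f_{\varepsilon}^{-1/2}\psi$, which depends on $t$. That leaves a residual transverse term $\frac{1}{\varepsilon^{2}}\int_{\Omega}a_{\varepsilon}\overline{\psi}\,\partial_{t}(f_{\varepsilon}^{-1/2})\,\partial_{t}\psi_{\varepsilon}\,f_{\varepsilon}$, which the paper controls by an extra trick: re-testing~(\ref{rr}) against $t K_{1}\psi$ and invoking the expansion $\partial_{t}f_{\varepsilon}^{-1/2}=\frac{d-1}{2}\varepsilon K_{1}+O(\varepsilon^{2})$ to derive estimate~(\ref{K1}). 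By first setting $F:=G$, reducing to $\|R\|_{0}=\|P_{0}\psi_{\varepsilon}-\psi\|_{0}$, and then testing the $\varepsilon$-form against the $t$-independent $\Phi$ (not $U_{\varepsilon}^{-1}\Phi$), you make the $1/\varepsilon^{2}$ contribution vanish identically, so the $tK_{1}$-trick is unnecessary. The remaining error terms — the averaging defect $\langle a_{\varepsilon}\rangle-2\overline{a}$ yielding $O(d(\varepsilon))$, the geometric $O(\varepsilon)$ defects from $G^{\mu\nu}-g^{\mu\nu}$ and $f_{\varepsilon}-1$, and the cross term with $\partial_{s}P_{0}^{\perp}\psi_{\varepsilon}$ handled via Lemma~\ref{l3} — are then estimated exactly as in the paper, and the $f_{\varepsilon}$- vs.\ $f_{\varepsilon}^{1/2}$-weight discrepancy you flag is indeed harmless since $\|f_{\varepsilon}^{1/2}-1\|_{\infty}=O(\varepsilon)$. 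Both arguments rest on the same Lemmas~\ref{odhady}, \ref{ch}, \ref{nabla}, \ref{l3}; yours buys somewhat lighter bookkeeping by keeping the test function independent of the transverse variable.
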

\begin{proof}
Due to relations (\ref{c}) and (\ref{d}), one has 
$$
\begin{aligned}
\lefteqn{
\left(F,\left(U_{\varepsilon}\left(H_{\varepsilon}+1\right)^{-1}U_{\varepsilon}^{-1}-  \left(H_{\mathrm{eff}}+1\right)^{-1}\oplus 0^{\perp}\right)G\right)_{0}
}
\\
&=\left(F,U_{\varepsilon}\left(H_{\varepsilon}+1\right)^{-1}U_{\varepsilon}^{-1}G\right)_{0}
-\left(\left(\left(H_{\mathrm{eff}}+1\right)^{-1}\oplus 0^{\perp}\right)F,G\right)_{0}
\\
&=\left(\left(P_{0}+P_{0}^{\perp}\right)F,\left(P_{0}+P_{0}^{\perp}\right)U_{\varepsilon}\psi_{\varepsilon}\right)_{0}-\left(\left(H_{\mathrm{eff}}+1\right)^{-1}P_{0}F,G\right)_{0}
\\
&=\left(\left(H_{\mathrm{eff}}+1\right)\psi,P_{0}(U_{\varepsilon}\psi_{\varepsilon})\right)_{0}+\left(P_{0}^{\perp}F,P_{0}^{\perp}(U_{\varepsilon}\psi_{\varepsilon})\right)_{0}-\left(\psi,U_{\varepsilon}\left(H_{\varepsilon}+1\right)\psi_{\varepsilon}\right)_{0}
\\
& =\left(H_{\mathrm{eff}}\psi,P_{0}(U_{\varepsilon}\psi_{\varepsilon})\right)_{0}+\left(P_{0}^{\perp}F,P_{0}^{\perp}(U_{\varepsilon}\psi_{\varepsilon})\right)_{0}-\left(U_{\varepsilon}^{-1}\psi,H_{\varepsilon}\psi_{\varepsilon}\right)_{\varepsilon}
\\
&= {h}_{\mathrm{eff}}\left(\psi,P_{0}(U_{\varepsilon}\psi_{\varepsilon})\right)-h_{\varepsilon}\left(U_{\varepsilon}^{-1}\psi,\psi_{\varepsilon}\right)+ \left(P_{0}^{\perp}F,P_{0}^{\perp}(U_{\varepsilon}\psi_{\varepsilon})\right)_{0} =: T,
\end{aligned}
$$
where $Q_{\mathrm{eff}}$ and $Q_{\varepsilon}$ are sesquilinear forms of $H_{\mathrm{eff}}$ and $H_{\varepsilon}$. 
The last line explicitly equals
 $$
 \begin{aligned}
 T &= \int_{\Omega}\left[ \overline{a}\frac{\partial\overline{\psi}}{\partial s^{\mu}}g^{\mu \nu}\frac{\partial P_{0}\left(f_{\varepsilon}^{1/2}\psi_{\varepsilon}\right)}{\partial s^{\nu}}-a_{\varepsilon}\frac{\partial \left( f_{\varepsilon}^{-1/2}\overline{\psi}\right)}{\partial s^{\mu}}G^{\mu \nu}\frac{\partial \psi_{\varepsilon}}{\partial s^{\nu}} f_{\varepsilon}\right] \mathrm{d}\Sigma\wedge\mathrm{d}t
\\ &\quad
 -\frac{1}{\varepsilon^{2}}\int_{\Omega}a_{\varepsilon}\overline{\psi}\frac{\partial\left( f_{\varepsilon}^{-1/2}\right)}{\partial t}\frac{\partial\psi_{\varepsilon}}{\partial t}f_{\varepsilon}\hspace{0.1cm}\mathrm{d}\Sigma\wedge\mathrm{d}t +\int_{\Omega} \overline{P_{0}^{\perp}F}P_{0}^{\perp}(\psi_{\varepsilon}f_{\varepsilon}^{1/2})\hspace{0.1cm}\mathrm{d}\Sigma\wedge\mathrm{d}t
  \end{aligned}
 $$

 Let us write
$$
J := \int_{\Omega}\overline{a}\frac{\partial\overline{\psi}}{\partial s^{\mu}}g^{\mu \nu}\frac{\partial P_{0}\left(f_{\varepsilon}^{1/2}\psi_{\varepsilon}\right)}{\partial s^{\nu}}\mathrm{d}\Sigma\wedge\mathrm{d}t-a_{\varepsilon}\frac{\partial \left( f_{\varepsilon}^{-1/2}\overline{\psi}\right)}{\partial s^{\mu}}G^{\mu \nu}\frac{\partial\psi_{\varepsilon}}{\partial s^{\nu}} f_{\varepsilon} \hspace{0.1cm} \mathrm{d}\Sigma\wedge\mathrm{d}t=  J_{1} + J_{2} + J_{3} - J_{4} - J_{5} 
$$ 
with
\begin{equation*}
\begin{aligned}
J_{1} &:= \int_{\Omega}\overline{a}\frac{\partial\overline{\psi}}{\partial s^{\mu}}g^{\mu \nu}\left(\frac{\partial P_{0}\left(f_{\varepsilon}^{1/2}\psi_{\varepsilon}\right)}{\partial s^{\nu}}-\frac{\partial P_{0}\psi_{\varepsilon}}{\partial s^{\nu}}\right)\mathrm{d}\Sigma\wedge\mathrm{d}t \,,
\\
J_{2} &:= \int_{\Omega}\left(\overline{a}-a_{\varepsilon}f_{\varepsilon}\right)\frac{\partial \overline{\psi}}{\partial s^{\mu}}g^{\mu \nu}\frac{\partial P_{0} \psi_{\varepsilon}}{\partial s^{\nu}}\hspace{0.1cm}\mathrm{d}\Sigma\wedge\mathrm{d}t \,,
\end{aligned}
\end{equation*}
\begin{equation*}
\begin{aligned}
J_{3} &:= \int_{\Omega}a_{\varepsilon}\frac{\partial \overline{\psi}}{\partial s^{\mu}}\left(g^{\mu \nu}-f_{\varepsilon}^{-1/2}G^{\mu \nu}\right)\frac{\partial P_{0} \psi_{\varepsilon}}{\partial s^{\nu}}\hspace{0.1cm}f_{\varepsilon}\hspace{0.1cm}\mathrm{d}\Sigma\wedge\mathrm{d}t \,,
\\
J_{4} &:= \int_{\Omega}a_{\varepsilon} \frac{\partial f_{\varepsilon}^{-1/2}}{\partial s^{\mu}}\overline{\psi}G^{\mu \nu}\frac{\partial \psi_{\varepsilon}}{\partial s^{\nu}} f_{\varepsilon}\hspace{0.1cm}\mathrm{d}\Sigma\wedge\mathrm{d}t \,,
\\
J_{5} &:= \int_{\Omega}a_{\varepsilon}f_{\varepsilon}^{1/2}\frac{\partial \overline{\psi}}{\partial s^{\mu}}G^{\mu \nu}\frac{\partial P_{0}^{\perp} \psi_{\varepsilon}}{\partial s^{\nu}} \hspace{0.1cm}\mathrm{d}\Sigma\wedge\mathrm{d}t \,.
\end{aligned}
\end{equation*}
 \noindent In order to estimate the integral $J_{1}$, we show the following equalities using the properties of the orthogonal projection $P_{0}$ and the exchange of derivatives and integral.
$$
\begin{aligned}
\frac{\partial P_{0}\left(f_{\varepsilon}^{1/2}\psi_{\varepsilon}\right)}{\partial s^{\nu}}
&=\frac{\partial}{\partial s^{\nu}}\left[\frac{1}{2} \int_{-1}^{1}\left(f_{\varepsilon}^{1/2}-1\right)\left(P_{0}+P_{0}^{\perp}\right)\psi_{\varepsilon}\mathrm{d}t+P_{0}\psi_{\varepsilon} \right]
\\
&= \frac{\partial P_{0}\psi_{\varepsilon}}{\partial s^{\nu}}+\frac{1}{2}\frac{\partial P_{0}\psi_{\varepsilon}}{\partial s^{\nu}}\int_{-1}^{1}\left(f_{\varepsilon}^{1/2}-1\right)\mathrm{d}t+ \frac{1}{2}P_{0}\psi_{\varepsilon}\int_{-1}^{1}\frac{\partial f_{\varepsilon}^{1/2}}{\partial s^{\nu}}\mathrm{d}t
\\
& \quad
+\frac{1}{2}\int_{-1}^{1}\frac{\partial f_{\varepsilon}^{1/2}}{\partial s^{\nu}}P_{0}^{\perp}\psi_{\varepsilon}\mathrm{d}t+\frac{1}{2}\int_{-1}^{1}\left(f_{\varepsilon}^{1/2}-1\right)\frac{\partial P_{0}^{\perp}\psi_{\varepsilon}}{\partial s^{\nu}}\mathrm{d}t.
\end{aligned}
$$ 
Consequently, the integral $J_{1}$ satisfies 
$$
\begin{aligned}
|J_{1}|
&\leq C \left\| \left| \nabla_{g}\psi \right|_{g}\right\| \bigg[
 \left\| \left| \nabla_{g}f_{\varepsilon}^{1/2}\right|_{g}\right\|_{\infty}\Big(
\left\| P_{0} \psi_{\varepsilon} \right\|
+\left\| P_{0}^{\perp} \psi_{\varepsilon} \right\| \Big)
\\
 & \quad + \left\|f_{\varepsilon}^{1/2}-1\right\|_{\infty} \left( \left\| \left| \nabla_{g}\left(P_{0}\psi_{\varepsilon}\right) \right|_{g}\right\| + \left\| \left| \nabla_{g}\left(P_{0}^{\perp}\psi_{\varepsilon}\right) \right|_{g}\right\| \right)\bigg]
 \\
 & \leq \varepsilon\hspace{0.05cm} C \left\|F \right\| \left\|G \right\|,
\end{aligned}
$$
where the last line is valid due to Lemmas \ref{odhady} and \ref{ch}. Furthermore, integrals $J_{2}$, $J_{3}$ and $J_{4}$ comply with estimates
$$
\begin{aligned}
\left| J_{2} \right|
&=
\left| \int_{\Sigma}\left(\int_{-1}^{1}\overline{a}-a_{\varepsilon}f_{\varepsilon}\mathrm{d}t\right)\frac{\partial \overline{\psi}}{\partial s^{\mu}}g^{\mu \nu}\frac{\partial P_{0} \psi_{\varepsilon}}{\partial s^{\nu}}\hspace{0.1cm}\mathrm{d}\Sigma \right| \leq d(\varepsilon) \hspace{0.05cm}C  \left\| \left| \nabla_{g}\psi\right|_{g}\right\| \left\| \left| \nabla_{g}\left( P_{0}\psi_{\varepsilon}\right) \right|_{g}\right\|
 \\
 &\leq d(\varepsilon) \hspace{0.05cm} C \left\|F \right\| \left\|G \right\| \,,
 \\
 \left| J_{3} \right| &\leq \left| \int_{\Omega}a_{\varepsilon}\frac{\partial \overline{\psi}}{\partial s^{\mu}}\left(f_{\varepsilon}^{-1/2}-1\right)G^{\mu \nu}\frac{\partial P_{0} \psi_{\varepsilon}}{\partial s^{\nu}}\hspace{0.1cm}f_{\varepsilon}\hspace{0.1cm}\mathrm{d}\Sigma\wedge\mathrm{d}t \right| + \left| \int_{\Omega}a_{\varepsilon}\frac{\partial \overline{\psi}}{\partial s^{\mu}}\left(G^{\mu \nu}-g^{\mu \nu}\right)\frac{\partial P_{0} \psi_{\varepsilon}}{\partial s^{\nu}}\hspace{0.1cm}f_{\varepsilon}\hspace{0.1cm}\mathrm{d}\Sigma\wedge\mathrm{d}t \right|
\\
& \leq \varepsilon\hspace{0.05cm}C  \left\| \left| \nabla_{g}\psi\right|_{g}\right\| \left\| \left| \nabla_{g}\left( P_{0}\psi_{\varepsilon}\right) \right|_{g}\right\|  \leq \varepsilon\hspace{0.05cm} C \left\|F \right\|\left\|G \right\|,
\\
\left| J_{4} \right| &\leq C  \left\| \left| \nabla_{g}\left(f_{\varepsilon}^{-1/2}\right)\right|_{g}\right\|_{\infty} \left\| \left| \nabla_{g} \psi_{\varepsilon} \right|_{g}\right\|  \left\| \psi \right\| \leq \varepsilon\hspace{0.05cm} C \left\|F \right\| \left\|G \right\|,
\end{aligned}
 $$
 which are proven using $G^{\mu \nu}-g^{\mu\nu} = O(\eps)$ as $\eps\mapsto 0$ and Lemma \ref{odhady}. From the estimates of $J_{1}$--$J_{4}$ mentioned above and Lemma \ref{l3}, 
 it can be seen that
\begin{equation}\label{J}
\left| J \right| \leq \max\left\{d(\varepsilon),\varepsilon\right\} \hspace{0.05cm} C \left\|F \right\| \left\|G \right\|.
\end{equation}
\indent In this part of proof, we will follow an idea in \cite[proof~of~Thm.~1.4]{kre-tus-2}. First, we rewrite the equation (\ref{rr}) for the function $t\hspace{0.02cm}K_{1}\psi$,
 explicitly 
 $$
 \begin{aligned}
     \int_{\Omega} a_{\varepsilon} t \hspace{0.02cm}K_{1}\frac{\partial \overline{\psi}}{\partial s^{\mu}}G^{\mu \nu}\frac{\partial \psi_\varepsilon}{\partial s^{\nu}} f_{\varepsilon}\hspace{0.1cm} \mathrm{d}\Sigma\wedge\mathrm{d}t+\int_{\Omega} a_{\varepsilon} t \hspace{0.03cm}\overline{\psi}\frac{\partial K_{1}}{\partial s^{\mu}}G^{\mu \nu}\frac{\partial \psi_\varepsilon}{\partial s^{\nu}} f_{\varepsilon}\hspace{0.1cm} \mathrm{d}\Sigma\wedge\mathrm{d}t &
     \\
 + \frac{1}{\varepsilon^{2}}\int_{\Omega} a_{\varepsilon}K_{1}\overline{\psi} \frac{\partial \psi_{\varepsilon}}{\partial t}f_{\varepsilon}\hspace{0.1cm} \mathrm{d}\Sigma\wedge\mathrm{d}t +\int_{\Omega} t\hspace{0.02cm}K_{1}\overline{\psi} \psi_{\varepsilon}f_{\varepsilon}\hspace{0.1cm} \mathrm{d}\Sigma\wedge\mathrm{d}t &=\int_{\Omega}t\hspace{0.02cm}K_{1} \overline{\psi}G f^{1/2}_{\varepsilon}\hspace{0.1cm} \mathrm{d}\Sigma\wedge\mathrm{d}t. 
  \end{aligned}
$$  
Consequently,
$$
\left|  \frac{1}{\varepsilon^{2}}\int_{\Omega} a_{\varepsilon}K_{1}\overline{\psi} \frac{\partial \psi_{\varepsilon}}{\partial t}f_{\varepsilon}\hspace{0.1cm} \mathrm{d}\Sigma\wedge\mathrm{d}t \right| \leq C \left\|F \right\| \left\|G \right\|. 
$$
Then the observation 
$$\frac{\partial f^{-1/2}}{\partial t}= \frac{d-1}{2}\varepsilon K_{1}+\mathcal{O}(\varepsilon^{2}) \qquad \mbox{as} \qquad
  \eps \to 0 $$
gives the following estimate for all sufficiently small $\varepsilon$:
\begin{equation}\label{K1}
\left| \frac{1}{\varepsilon^{2}}\int_{\Omega}a_{\varepsilon}\overline{\psi}\frac{\partial\left( f_{\varepsilon}^{-1/2}\right)}{\partial t}\frac{\partial\psi_{\varepsilon}}{\partial t}f_{\varepsilon}\hspace{0.1cm}\mathrm{d}\Sigma\wedge\mathrm{d}t\right| \leq \varepsilon\hspace{0.05cm} C \left\| F \right\| \left\|G \right\|.
\end{equation}
 \indent The last part can be estimated using Lemmas \ref{odhady} and \ref{ch} as follows:
\begin{equation}\label{final}
\begin{aligned}  
\left| \int_{\Omega} \overline{P_{0}^{\perp}F}P_{0}^{\perp}\left(\psi_{\varepsilon}f_{\varepsilon}^{1/2}\right)\hspace{0.1cm}\mathrm{d}\Sigma\wedge\mathrm{d}t \right| 
&\leq  C\left\| P_{0}^{\perp}F\right\| \left\| P_{0}^{\perp}\left(\psi_{\varepsilon}f_{\varepsilon}^{1/2}\right) \right\|
\\
&\leq  C\left\|F\right\| \left( \left\| P_{0}^{\perp}\left(\psi_{\varepsilon}f_{\varepsilon}^{1/2}-\psi_{\varepsilon}\right) \right\| + \left\| P_{0}^{\perp}\psi_{\varepsilon} \right\| \right)
\\
&\leq C \left\|F\right\| \left( \left\|f_{\varepsilon}^{1/2}-1\right\|_{\infty} \left\| \psi_{\varepsilon} \right\| + \left\| P_{0}^{\perp}\psi_{\varepsilon} \right\|\right) 
\\
&\leq \varepsilon\hspace{0.05cm} C \left\|F \right\| \left\|G \right\|.
\end{aligned}
\end{equation}
Relations \eqref{J}--\eqref{final} directly implies 
$$\left| \left(F,\left(U_{\varepsilon}\left(H_{\varepsilon}+1\right)^{-1}U_{\varepsilon}^{-1}-  \left(H_{\mathrm{eff}}+1\right)^{-1}\oplus 0^{\perp}\right)G\right)_{0} \right|\leq C \max\left\{ \varepsilon, d(\varepsilon)\right\}\hspace{0.05cm} \left\|F\right\|_{0} \left\|G \right\|_{0},$$
so the theorem is proven.
\end{proof}
\label{rate}

\section{Conclusion}\label{Sec.end}
Let us discuss the optimality of assumptions under which Theorem \ref{MTHM} holds. 

The present geometric assumptions follow Yachimura's setting~\cite{tosy},
but they can be weaken in several directions.
First, the $C^3$-smoothness of the hypersurface~$\Sigma$
could be reduced to the $C^2$-smoothness by employing 
the trick of~\cite{sed} of an $\eps$-dependent regularisation  
of curvatures in the unitary transform~\eqref{Ue}.
It is also possible to go beyond the $C^2$-smoothness
in the spirit of~\cite{KZ3}.
More importantly, there is no substantial reason 
for considering compact hypersurfaces. 
The generalised norm-resolvent convergence 
is expected to hold for tubular neighbourhoods 
of arbitrary hypersurfaces;
in the non-compact case, it is only important to assume
the injectivity hypothesis~\eqref{overlap2} \emph{ad hoc}
(see, e.g., \cite{kre-tus}).
As a consequence, one gets the convergence of eigenvalues
below the essential spectrum as well as of the corresponding eigenfunctions.
 
The present assumptions about the non-homogeneity~$a$
substantially generalise the piece-wise constant 
setting of Yachimura's~\cite{tosy}.
The basic hypothesis~\eqref{a0} is rather standard
for elliptic problems.
On the other hand, while the ``longitudinal'' differentiability
of the non-homogeneity~\eqref{a3} is certainly needed
for our approach based on the \emph{a priori} estimate of Lemma~\ref{nabla},
we make no claim about its necessity 
for the validity of the norm-resolvent convergence.
Similarly, we leave open the question of optimality of hypothesis~\eqref{a4},
while the existence of a pointwise limit~$\overline{a}$
is necessary for the very definition of 
the effective Hamiltonian~$H_{\mathrm{eff}}$.

Finally, let us mention interesting extentions of the studied model.
First, it is possible to consider more general boundary conditions,
namely Robin boundary conditions, 
including those with a complex coupling~\cite{BK2,kre-sieg},
as well as combined boundary conditions in the spirit of
\cite{DKriz2,K5,K10}.
Second, motivated by a recent development of quasi-Hermitian
quantum mechanics~\cite{KSTV},  
the non-homogeneity~$a$ could be considered to be a complex function,
possibly enriched by complex magnetic fields~\cite{K13}.

\subsection*{Acknowledgment}
The author was supported
by the EXPRO grant No.~20-17749X
of the Czech Science Foundation.
 
%
\bibliography{main}
\bibliographystyle{amsplain}

\end{document}